\begin{document}
\title{Cooperative Energy Trading in CoMP Systems Powered by Smart Grids}

%\author{Jie Xu,~\IEEEmembership{Member,~IEEE}, and Rui Zhang,~\IEEEmembership{Senior~Member,~IEEE}
\author{Jie Xu and Rui Zhang
% <-this % stops a space
%\thanks{Copyright (c) 2015 IEEE. Personal use of this material is permitted. However, permission to use this material for any other purposes must be obtained from the IEEE by sending a request to pubs-permissions@ieee.org.}
\thanks{%Manuscript received January 25, 2015; revised March 23, 2015; accepted April 09, 2015. This work was supported in part by the National University of Singapore under research grant R-263-000-A48-112.
This paper has been presented in part at IEEE Global Communications Conference (Globecom), Austin, TX USA, December 8-12, 2014.% The review of this paper was coordinated by Dr. S. Tomasin.
}% <-this % stops a space
\thanks{J. Xu is with the Department of Electrical and Computer Engineering, National University of Singapore (e-mail: jiexu.ustc@gmail.com).}
\thanks{R. Zhang is with the Department of Electrical and Computer Engineering, National
University of Singapore (e-mail: elezhang@nus.edu.sg). He is also with the Institute for Infocomm Research, A*STAR, Singapore.}}
\maketitle

\begin{abstract}
This paper studies the energy management in the coordinated multi-point (CoMP) systems powered by smart grids, where each base station (BS) with local renewable energy generation is allowed to implement the two-way energy trading with the grid. Due to the uneven renewable energy supply and communication energy demand over distributed BSs as well as the difference in the prices for their buying/selling energy from/to the gird, it is beneficial for the cooperative BSs to jointly manage their energy trading with the grid and energy consumption in CoMP based communication for reducing the total energy cost. Specifically, we consider the downlink transmission in one CoMP cluster by jointly optimizing the BSs' purchased/sold energy units from/to the grid and their cooperative transmit precoding, so as to minimize the total energy cost subject to the given quality of service (QoS) constraints for the users. First, we obtain the optimal solution to this problem by developing an algorithm based on techniques from convex optimization and the uplink-downlink duality. Next, we propose a sub-optimal solution of lower complexity than the optimal solution, where zero-forcing (ZF) based precoding is implemented at the BSs. Finally, through extensive simulations, we show the performance gain achieved by our proposed joint energy trading and communication cooperation schemes in terms of energy cost reduction, as compared to conventional schemes that separately design communication cooperation and energy trading.
\end{abstract}

\begin{keywords}
Smart grids, coordinated multi-point (CoMP), two-way energy trading, precoding, convex optimization, uplink-downlink duality.
\end{keywords}

\IEEEpeerreviewmaketitle
\setlength{\baselineskip}{1\baselineskip}
\newtheorem{definition}{\underline{Definition}}[section]
\newtheorem{fact}{Fact}
\newtheorem{assumption}{Assumption}
\newtheorem{theorem}{\underline{Theorem}}[section]
\newtheorem{lemma}{\underline{Lemma}}[section]
\newtheorem{corollary}{\underline{Corollary}}[section]
\newtheorem{proposition}{\underline{Proposition}}[section]
\newtheorem{example}{\underline{Example}}[section]
\newtheorem{remark}{\underline{Remark}}[section]
\newtheorem{algorithm}{\underline{Algorithm}}[section]
\newcommand{\mv}[1]{\mbox{\boldmath{$ #1 $}}}
\section{Introduction}\label{sec:Introduction}

Due to the explosive increase of mobile data traffic, cellular operators have been deploying denser base stations (BSs) with increasing frequency reuse factor for providing higher capacity to subscribers. However, this gives rise to the more severe inter-cell interference (ICI), which becomes one key issue to be tackled in future wireless networks. To meet this challenge, coordinated multi-point (CoMP) transmission has emerged as one promising technique (see \cite{Gesbert2010,Yang2013How} and the references therein), where multiple BSs cooperatively serve a group of mobile users by implementing baseband signal coordination to transform the harmful ICI into useful information signals for coherent transmission/reception in the downlink and uplink transmissions, respectively.

Another potential challenge faced by cellular operators is their drastically increasing operational costs due to the on-grid energy consumption by the growing number of BSs. Among assorted solutions that are proposed to overcome this issue, equipping BSs with energy harvesters that can harvest energy from the environmental sources, e.g., solar and wind, is a promising solution, since the cost of renewable energy generation is generally lower than that of the conventional energy from the grid \cite{huawei,HanAnsari2014}. Furthermore, with the advancement of smart girds technology, two-way energy and information flows become feasible between distributed loads and the grid for enabling more energy-efficient power networks \cite{XueSmartGird}. As a specific type of energy loads, BSs in cellular networks can thereby be enabled to implement the two-way energy trading with the grid \cite{Chen2013,WangSaadHan,Leithon2013} to more efficiently utilize their locally generated renewable energy for saving the energy cost. With two-way energy trading, BSs of renewable energy surplus can sell its excessive energy to the grid to make profit, while BSs of renewable energy deficit can buy additional energy from the grid to maintain its reliable operation and communication.

In this paper, we pursue a unified study of CoMP based communication cooperation and two-way energy trading enabled energy cooperation in a cellular system powered by smart girds, where each BS is equipped with one or more energy harvesting devices (wind-turbines and/or solar panels) to generate renewable energy locally, and implements the two-way energy trading with the grid. In practice, the renewable generation rates and communication energy demands are both uneven over distributed BSs, which is due to the fact that different BSs may use different types of energy harvesting devices (with distinct energy generation capacities), and face diverse wireless service requests from randomly arrived cellular users. Furthermore, due to different energy supply and demand conditions in the whole grid, the prices for each BS to buy and sell one unit of energy from/to the grid are in general different \cite{Chen2013}. By considering all these practical issues, new research challenges are imposed on the cost-efficient management of energy consumption in CoMP based communication and energy trading with the grid, for which the conventional CoMP designs (see, e.g., \cite{WieselEldarShamai2006,Bengtsson1999} and the references  therein) that ignore the new feature of two-way energy trading in smart grids are no more effective.

\begin{table*}[ht]
\renewcommand{\arraystretch}{1.3}
\caption{Performance Comparison for the Toy Example in Fig. \ref{fig:toy_example}}
\label{table1} \centering
\begin{tabular}{|p{1.7in}|p{0.8in}|p{0.8in}|p{0.8in}|p{0.8in}|p{0.8in}|}
\hline
& Transmit power at BS 1& Transmit power at BS 2& Total energy consumption&Total energy cost\\ \hline
Conventional CoMP design \cite{Gesbert2010}& $\bar P_{{\rm t},1} = 0.64$& $\bar P_{{\rm t},2}= 0.16$ & $\bar Q = 0.8$&$\bar C = 0.356$\\ \hline
Proposed CoMP design with joint energy trading consideration& $\hat  P_{{\rm t},1} = 0.25$ & $\hat  P_{{\rm t},2} = 1$ & $\hat Q = 1.25$&$\hat C = 0.05$
\\ \hline
\end{tabular}
\end{table*}
\begin{figure}
\centering
 \epsfxsize=1\linewidth
    \includegraphics[width=7cm]{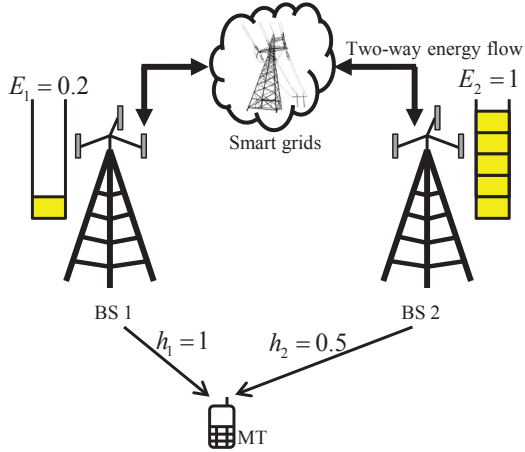}
\caption{A toy example of cellular system with two single-antenna BSs cooperatively transmitting to one single-antenna MT.} \label{fig:toy_example}\vspace{0em}
\end{figure}

Specifically, to illustrate this new challenge, we consider a toy example as depicted in Fig. \ref{fig:toy_example}, where two single-antenna BSs cooperatively transmit to a single-antenna mobile terminal (MT), and the MT has a given quality of service (QoS) requirement to be met. Suppose that the channel coefficients from BS 1 and BS 2 to the MT are given by $h_{1} = 1$ and $h_{2} = 0.5$, respectively, and the background Gaussian noise at the MT has a normalized power of $1$. By assuming that received signals are coherently combined, the resulting signal-to-noise-ratio (SNR) at the MT is then expressed as $\mathtt{SNR} = \left(h_{1}\sqrt{P_{{\rm t},1}} + h_{2}\sqrt{P_{{\rm t},2}}\right)^2$ with $P_{{\rm t},1} \ge 0$ and $P_{{\rm t},2} \ge 0$ denoting the transmit power at BS 1 and BS 2, respectively. We assume the QoS requirement at the MT as a minimum SNR equal to $1$, i.e., $\mathtt{SNR} \ge 1$. Let the energy harvesting rates at the two BSs be denoted by $E_1 = 0.2$ and $E_2 = 1$, and the energy prices for both the two BSs to buy and sell one unit of energy from and to the grid by $\alpha_{\rm b} = 1$ and $\alpha_{\rm s} = 0.1$, respectively. Note that all units are normalized for simplicity. Under the above setup, we compare the conventional CoMP design to minimize the total transmit power of the two BSs (but ignoring the different renewable energy rates at the two BSs as well as the two-way energy trading prices) versus our proposed new CoMP design with joint energy trading consideration (details will be given later), including their respective power, total energy consumption, and total energy cost in Table \ref{table1}. More details on their performance comparison are given as follows.

\begin{itemize}
  \item {\bf Conventional CoMP design \cite{Gesbert2010}:} The two BSs first determine their transmit power values with CoMP downlink transmission to minimize the sum-power to achieve $\mathtt{SNR} = 1$ at the MT. Based on the optimal maximal ratio combining (MRC) principle, the transmit powers at BS 1 and BS 2 are obtained as $\bar P_{{\rm t},1} = 0.64$ and $\bar P_{{\rm t},2}= 0.16$. The total energy consumption of the two BSs is thus $\bar{Q}=\bar P_{{\rm t},1}+\bar P_{{\rm t},2}=0.8$. Next, given the available renewable energy at the two BSs, BS 1 needs to purchase $\bar P_{{\rm t},1} - E_1 = 0.44$ unit of energy from the grid at the price $\alpha_{\rm b}=1$ per unit, while BS 2 should sell $E_2- \bar P_{{\rm t},2} = 0.84$ unit of energy to the grid at the price $\alpha_{\rm s}=0.1$ per unit,  resulting in a total energy cost of $\bar C = \alpha_{\rm b} (\bar P_{{\rm t},1} - E_1) - \alpha_{\rm s} (E_2- \bar P_{{\rm t},2}) = 0.356$.
  \item {\bf Proposed CoMP design with joint energy trading consideration:}  The two BSs jointly optimize the CoMP based communication cooperation and the energy trading with the grid for minimizing the total energy cost (see Section \ref{sec:optimal} for the detailed algorithm). Accordingly, the transmit powers at BS 1 and BS 2 are given by $\hat  P_{{\rm t},1} = 0.25$ and $\hat  P_{{\rm t},2} = 1$. As a result, the total energy consumed is $\hat{Q}=\hat P_{{\rm t},1}+\hat P_{{\rm t},2}=1.25$, and the total energy cost is $\hat C = \alpha_{\rm b} (\hat P_{{\rm t},1} - E_1) - \alpha_{\rm s} (E_2- \hat P_{{\rm t},2}) = 0.05$.
\end{itemize}
It is observed from the above comparison that although the  conventional CoMP design achieves the lowest sum-power consumption (i.e., $\bar{Q} = 0.8$ as compared to $\hat{Q} = 1.25$), it incurs much larger total energy cost than the proposed design (i.e., $\bar C = 0.356$ versus $\hat C = 0.05$). This is because the conventional design optimizes transmit powers without considering the differences in renewable generation rates at the two BSs as well as the energy buying/selling prices with the grid, while the proposed design exploits such differences, so that the cheaper renewable energy is more efficiently utilized and the more expensive energy from the grid is minimized. This toy example, albeit being simplistic, suggests that to reduce the energy cost of CoMP systems powered by smart grids, it is crucial to jointly optimize the BSs' energy management in CoMP transmission and energy trading with the grid, by taking into account the different renewable generation rates over the BSs and the distinct energy buying/selling prices. To our best knowledge, such a joint optimization approach has not been studied in the literature, which motivates this work.

For the purpose of exposition, we consider in this paper the downlink transmission in one CoMP cluster, where a group of multiple-antenna BSs each with local renewable energy generation cooperatively transmit to a set of single-antenna MTs by applying linear transmit precoding. We jointly optimize the BSs' purchased/sold energy units from/to the grid and their cooperative transmit precoding, so as to minimize the total energy cost of the BSs subject to the given QoS constraints for the MTs. First, we obtain the optimal solution to this problem by developing an algorithm based on techniques from convex optimization \cite{BoydVandenberghe2004} and the uplink-downlink duality \cite{YuLan2007}. Next, we propose a sub-optimal solution of lower complexity than the optimal solution, where zero-forcing (ZF) based cooperative transmit precoding \cite{WieselZF,ZhangRuiBD} is considered. Finally, we show by extensive simulations the promising performance gain achieved by our proposed optimal and sub-optimal schemes in terms of energy cost reduction, as compared to conventional schemes with separate communication cooperation and energy trading designs.

It is worth noting that recently there have been several studies on improving the energy efficiency of cellular networks by taking advantage of various smart grid features \cite{BuYu2012,Chia2013,GuoTCOM,XuGuoZhang,XuDuanZhang}. For instance, the utilities of both the cellular network and the power network are jointly optimized in \cite{BuYu2012}. An alternative form of energy cooperation, where distributed BSs exchange their locally harvested energy via dedicated power lines \cite{Chia2013} or utilizing the smart grid infrastructure \cite{GuoTCOM,XuGuoZhang,XuDuanZhang}, is also studied.

It is also worth pointing out that our proposed joint energy trading and communication cooperation approach is more general than that in the prior works on exploiting the two-way energy trading in smart grids only \cite{Chen2013,WangSaadHan,Leithon2013}. The prior studies in \cite{Chen2013,WangSaadHan,Leithon2013} focus on the energy trading management for distributed loads with the grid, by assuming their energy demands to be given. In contrast, in this paper we consider a specific type of energy loads (i.e., BSs in cellular networks for communication) and jointly optimize their energy demands for communication and the two-way energy trading with the grid for reducing the overall energy cost. Our proposed approach thus provides new useful insights on the joint demand- and supply-side energy management in smart grids by considering controllable energy loads through communications design and scheduling.

%$\det(\mv{S})$ and ${\mathtt{tr}}(\mv{S})$ denote its determinant and trace, respectively, $\mv{S}\succeq \mv{0}$ and $\mv{S}\preceq \mv{0}$ mean that $\mv{S}$ is positive definite, positive semi-definite and negative semi-definite, respectively  $\mathbb{C}^{x\times y}$ and $\mathbb{R}^{x\times y}$ denotes the space of $x\times y$ complex and real matrices, respectively. ${\mathbb{E}}(\cdot)$ denotes the statistical expectation.

{\it Notation:} Boldface letters refer to vectors (lower  case) or matrices (upper case). For a square matrix $\mv{S}$, $\mv{S}^{-1}$  denotes its inverse, while $\mv{S}\succ \mv{0}$ means that $\mv{S}$ is positive definite. For an arbitrary-size matrix $\mv{M}$, $\mv{M}^H$ and $\mv{M}^T$ denote the conjugate transpose and transpose of $\mv{M}$, respectively, and $[{\mv{M}}]_{kl}$ denotes the element in the $k$th row and $l$th column of ${\mv{M}}$. $\mv{I}$ and $\mv{0}$ denote an identity matrix and an all-zero matrix, respectively, with appropriate dimensions. $\|\mv{x}\|$ denotes the Euclidean norm of a complex vector $\mv{x}$, and $|z|$ denotes the magnitude of a complex number $z$. For a real number $z$, $(z)^+ \triangleq\max(0,z)$. Symbol $j$ denotes the complex number $\sqrt{-1}$.

\section{System Model and Problem Formulation}\label{sec:system}

We consider practical cluster-based CoMP systems by focusing our study on one single cluster as shown in Fig. \ref{fig:1}, in which $N>1$ BSs each equipped with $M\ge 1$ antennas cooperatively send independent messages to $K$ single-antenna MTs. For convenience, we denote the set of MTs and that of BSs as $\mathcal{K} = \{1,\ldots,K\}$ and $\mathcal{N}=\{1,\ldots,N\}$, respectively. We assume that each BS is locally equipped with one or more energy harvesting devices (wind turbines and/or solar panels), and is also connected to the smart grid for implementing the two-way energy trading. We also assume that there is a central unit deployed for each CoMP cluster which coordinates the cooperative energy trading as well as the cooperative communication within the cluster. To this end, the central unit needs to collect both the communication data (i.e., the transmit messages and channel state information (CSI)) from each of the BSs through the cellular backhaul links, and the energy information (i.e., the energy harvesting rates and energy buying/selling prices) via the smart meters installed at BSs and the grid-deployed communication/control links connecting them. The central unit can be one of the $N$ BSs that serves as the cluster head, or a dedicatedly deployed entity in the network.

\begin{figure}
\centering
 \epsfxsize=1\linewidth
    \includegraphics[width=8cm]{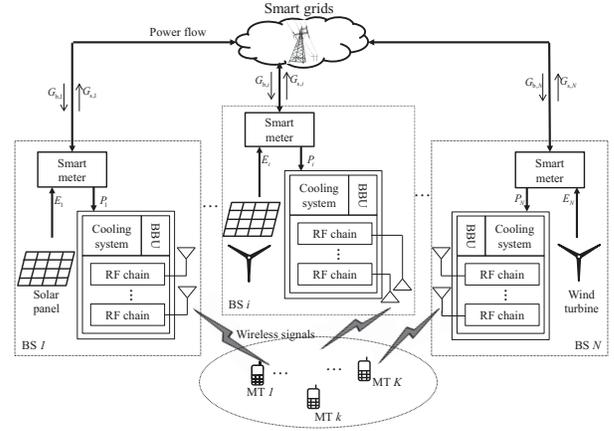}
\caption{A clustered CoMP system powered by smart grids, where the BSs have local renewable energy generation and implement the two-way energy trading with the grid.} \label{fig:1}
\end{figure}

We assume block-based transmissions and quasi-static models for both the renewable energy processes and wireless channels, where the energy harvesting rates and the channel coefficients remain constant during each communication block and may change from one block to another. This model is practical, since the coherence time of a wireless channel (say, several milliseconds) is usually much smaller than that of an energy harvesting process (e.g., a few tens of seconds for solar and wind power). For convenience, each block duration is normalized to unity unless otherwise specified; thus, the terms ``energy'' and ``power'' will be used interchangeably in the sequel. For the purpose of investigation, in this paper we do not consider energy storage devices used at each of the BSs due to their high deployment and maintenance costs, and will leave the case with finite energy storage at BSs for our future work. As a result, each BS will either consume all of its harvested energy for communication or sell any excessive energy to the grid during each block. This simplifies our analysis to one particular block for investigation, and also helps provide insights on the achievable gain by cooperative energy trading and communication cooperation. Next, we explain the energy management model at the BSs, then present the downlink CoMP transmission model, and finally formulate the optimization problem for joint energy trading and communication cooperation.

\subsection{Energy Management Model}

As assumed above, each BS is equipped with energy harvesting devices and is also connected to the grid for two-way energy trading. We denote the harvested energy at each BS $i\in\mathcal{N}$ as $E_i \ge 0$, which is a given constant for one block of our interest. We also denote the energy purchased (sold) from (to) the grid at BS $i$ as $G_{{\rm{b}},i} \ge 0$ ($G_{{\rm{s}},i}\ge 0$).{\footnote{Note that in practice, the purchased (sold) energy $G_{{\rm{b}},i}$ ($G_{{\rm{s}},i}$) by each BS $i$ from (to) the grid needs to be upper-bounded due to the limited capacity of the power lines connecting them. This is ensured here since $G_{{\rm{b}},i}$ and $G_{{\rm{s}},i}$ are no larger than the maximum power consumption at BS $i$ (to be specified later) and the maximum harvested energy of $E_i$, respectively, both of which are set to be smaller than the capacity of the power lines. Also note that cellular BSs are just one among many types of power loads in the smart grid, and thus the total energy sold by all BSs to the grid (i.e., $\sum_{i\in\mathcal N}G_{{\rm{s}},i}$) can be temporally larger or smaller than that purchased from the grid (i.e., $\sum_{i\in\mathcal N}G_{{\rm{b}},i}$). For example, in the case of more energy sold to the gird, the smart grid can schedule the excessive energy to supply other conventional loads (without renewable energy supply) in the grid.}} When each BS $i$ buys (sells) one unit energy from (to) the grid, we denote the price that it needs to pay to (or will be paid by) the grid as $\alpha_{{\rm{b}},i} > 0$ ($\alpha_{{\rm{s}},i} > 0$). Then we have the net energy cost at BS $i$ as
\begin{align}\label{total:cost}
C_i = \alpha_{{\rm b},i}G_{{\rm{b}},i} - \alpha_{{\rm s},i}G_{{\rm{s}},i}, i\in\mathcal{N}.
\end{align}
Note that $C_i$ can be positive (e.g., $G_{{\rm{b}},i}>0, G_{{\rm{s}},i}=0$), negative (e.g., $G_{{\rm{s}},i}>0,G_{{\rm{b}},i}=0$), or zero. In practice, to prevent any BS from buying the energy from the gird and then selling back to it to make non-justifiable profit which leads to energy inefficiency, the grid operator should set $\alpha_{{\rm{s}},i} \le \alpha_{{\rm{b}},i}, \forall i\in\mathcal{N}$; as a result, we can induce that at most one of $G_{{\rm b},i}$ and $G_{{\rm s},i}$ can be strictly positive, i.e., $G_{{\rm b},i} \cdot G_{{\rm s},i} = 0$ (otherwise, the cost in (\ref{total:cost}) can be further reduced by setting $G_{{\rm b},i}\leftarrow G_{{\rm b},i}- G_{{\rm s},i}$ and $G_{{\rm s},i}\leftarrow 0$ if $G_{{\rm b},i}\ge G_{{\rm s},i}>0$, or $G_{{\rm s},i}\leftarrow G_{{\rm s},i}- G_{{\rm b},i}$ and $G_{{\rm b},i}\leftarrow0$ if $G_{{\rm s},i}\ge G_{{\rm b},i}>0$). Moreover, the energy selling price is usually subject to a minimum value, given by $\alpha_{\min} > 0$, to encourage the renewable generation investment at the BSs; while the energy buying price cannot exceed the maximum electricity price in the grid, given by $\alpha_{\max} > 0$. Thus, we have
\begin{align}\label{eqn:price}
0 < \alpha_{\min} \le \alpha_{{\rm{s}},i} \le \alpha_{{\rm{b}},i}\le \alpha_{\max}, \forall i\in\mathcal{N}.
\end{align}

In cellular systems, the power consumption at each BS typically includes both the transmission power due to radio frequency (RF) power amplifiers (PAs), and the non-transmission power due to other components such as cooling systems, baseband units (BBU) for data processing, and circuits of RF chains (see Fig. \ref{fig:1}). We denote the radiated transmit power of each BS $i$ by $P_{{\rm t},i}\ge 0$, and generally model the non-transmission power as a constant denoted by $P_{{\rm c},i} > 0$. By combining them, we obtain the total power consumption at BS $i$, denoted by $P_i$, which should be no larger than the total energy available at BS $i$, i.e.,
\begin{align}
P_i = {P_{{\rm t},i}}/{\eta} + P_{{\rm c},i} \le E_i + G_{{\rm{b}},i} - G_{{\rm{s}},i},  i\in\mathcal{N},\label{eqn:power_consumption}
\end{align}
where $0 < \eta \le 1$ denotes the PA efficiency. Since $\eta$ is a constant, we normalize it as $\eta = 1$ unless stated otherwise.

\subsection{Downlink CoMP Transmission}

Next, we present the downlink CoMP transmission among the $N$ BSs in one cluster. We denote the channel vector from BS $i$ to MT $k$ as $\mv{h}_{ik} \in \mathbb{C}^{M\times 1}, i\in\mathcal{N},k\in\mathcal{K}$, and the channel vector from all $N$ BSs in the cluster to MT $k$ as $\mv{h}_k = [\mv{h}_{1k}^T~\ldots~\mv{h}_{Nk}^T]^T\in \mathbb{C}^{MN\times 1}, k\in\mathcal{K}$. We consider linear transmit beamforming applied at the BSs. Let the information signal for MT $k\in\mathcal{K}$ be denoted by $s_k$ and its associated beamforming vector across the $N$ BSs by $\mv{w}_k \in \mathbb{C}^{MN \times 1}$. Then the transmitted signal for MT $k$ can be expressed as
\begin{align*}
\mv{x}_k = \mv{w}_k s_k,
\end{align*}
where $s_k$ is assumed to be a complex random variable with zero mean and unite variance. Thus, the received signal at MT $k$ is given by
\begin{align*}
{y}_k = \mv{h}_k^H \mv{x}_k + \sum_{l\in\mathcal{K},l\neq k}\mv{h}_k^H\mv{x}_l+ v_k, k\in\mathcal{K},
\end{align*}
where $\mv{h}_k^H \mv{x}_k$ is the desired signal for MT $k$, $\sum_{l\in\mathcal{K},l\neq k}\mv{h}_k^H\mv{x}_l$ is the inter-user interference within the same cluster, and $v_k$ denotes the background additive white Gaussian noise (AWGN) at MT $k$, which may also include the downlink interference from other BSs outside this cluster. We assume that $v_k$'s are independent circularly symmetric complex Gaussian (CSCG) random variables each with zero mean and variance $\sigma_{k}^2$, i.e., $v_k\sim \mathcal{CN}(0, \sigma_{k}^2)$. Thus, the signal-to-interference-plus-noise-ratio (SINR) at MT $k$ can be expressed as
\begin{align*}
\mathtt{SINR}_k(\{\mv{w}_k\}) = \frac{|\mv{h}_k^H \mv{w}_k|^2 }{ \sum_{l\in\mathcal{K},l\neq k}|\mv{h}_k^H\mv{w}_l|^2 + \sigma_{k}^2}, k\in\mathcal{K}.
\end{align*}
The transmit power at each BS $i$, i.e., $P_{{\rm t}, i}$ in (\ref{eqn:power_consumption}), can be expressed as
\begin{align}\label{eqn:tx}
P_{{\rm{t}},i} =\sum_{k\in\mathcal{K}}\mv{w}_k^H\mv{B}_i\mv{w}_k, \forall i\in\mathcal{N},
\end{align}
where ${\mv{B}}_{i}\triangleq{\mathtt{Diag}}\bigg(\underbrace{0,\cdots,0}_{(i-1)M},\underbrace{1,\cdots,1}_{M},\underbrace{0,\cdots,0}_{(N-i)M}\bigg)$, with $\mathtt{Diag}(\mv{a})$ denoting a diagonal matrix with the diagonal elements given in the vector $\mv a$. In addition, we assume that the maximum transmit power at each BS $i$ is denoted by  $P_{{\rm max}, i} > 0$, and thus we have $P_{{\rm{t}},i} \le P_{{\rm max}, i}, \forall i\in\mathcal{N}$.

\subsection{Problem Formulation}
We aim to jointly optimize the $N$ BSs' purchased/sold energy from/to the grid, $\{G_{{\rm{b}},i}\}$ and $\{G_{{\rm{s}},i}\}$, and their cooperative transmit beamforming vectors, $\{\mv{w}_k\}$, so as to minimize the total energy cost of all $N$ BSs, i.e., $\sum_{i\in\mathcal{N}}C_i$ with $C_i$ given in (\ref{total:cost}), subject to each MT's QoS constraint that is specified by a minimum SINR requirement $\gamma_k$ for MT $k\in\mathcal{K}$. Here, the value of $\gamma_k$ should be set based on the service type (e.g., video call or online game) requested by each MT $k$. Mathematically, we formulate the joint energy trading and beamforming optimization problem as
\begin{align}
\mathrm{(P1)}:&\mathop{\mathtt{min}}\limits_{\{\mv{w}_k\},\{G_{{\rm{b}},i}\},\{G_{{\rm{s}},i}\}} \sum_{i\in\mathcal{N}}\left( \alpha_{{\rm b},i}G_{{\rm{b}},i} - \alpha_{{\rm s},i}G_{{\rm{s}},i}\right)\label{eqn:p1}\\
\mathtt{s.t.}~&\mathtt{SINR}_k(\{\mv{w}_k\})\ge \gamma_k, \forall k\in\mathcal{K}\label{eqn:7:final}\\
&\sum_{k\in\mathcal{K}}\mv{w}_k^H\mv{B}_i\mv{w}_k+ P_{{\rm c},i} \le E_i + G_{{\rm{b}},i} - G_{{\rm{s}},i}, \forall i\in\mathcal{N}
\label{eqn:8}\\
&\sum_{k\in\mathcal{K}}\mv{w}_k^H\mv{B}_i\mv{w}_k \le P_{{\rm max}, i}, \forall i\in\mathcal{N}\label{eqn:MaxPower}\\
&G_{{\rm{b}},i} \ge 0, G_{{\rm{s}},i} \ge 0, \forall i\in\mathcal{N},\label{eqn:11}
\end{align}
where (\ref{eqn:7:final}) denotes the set of QoS constraints for the $K$ MTs, (\ref{eqn:8}) specifies the power constraints at the $N$ BSs by combining (\ref{eqn:power_consumption}) and (\ref{eqn:tx}), and (\ref{eqn:MaxPower}) is for the individual maximum transmit power constraint at each of the $N$ BSs. In problem (P1), we do not explicitly add the constraint that at most one of $G_{{\rm{b}},i}$ and $G_{{\rm{s}},i}$ can be strictly positive for BS $i$, i.e., $G_{{\rm{b}},i}\cdot G_{{\rm{s}},i} = 0,\forall i\in\mathcal N$; however, it will be shown that the optimal solution to problem (P1) always satisfies such constraints, and thus there is no
loss of optimality by removing these constraints. Notice that problem (P1) is in general non-convex due to the non-convex QoS constraints in (\ref{eqn:7:final}).

Before solving (P1), we first check its feasibility as follows. Note that given any transmit beamforming vectors $\{\mv w_k\}$ satisfying (\ref{eqn:MaxPower}), each BS can always purchase sufficiently large amount of energy from the grid to satisfy the constraints in (\ref{eqn:8}) and (\ref{eqn:11}), e.g., by setting $G_{{\rm{b}},i} = (P_{{\rm max},i}+P_{{\rm c},i}-E_i)^+$ and $G_{{\rm{s}},i} = 0, \forall i\in \mathcal N$. As a result, to check the feasibility of (P1), we only need to check whether the constraints in (\ref{eqn:7:final}) and (\ref{eqn:MaxPower}) can be ensured at the same time. This is equivalent to solving the following feasibility problem to determine whether the $N$ BSs can use their individual power to meet the QoS constraints for all the $K$ MTs:
\begin{align}
\mathtt{find}~&  \{\mv{w}_k\} \nonumber\\
\mathtt{s.t.}~&\mathtt{SINR}_k(\{\mv{w}_k\})\ge \gamma_k, \forall k\in\mathcal{K}\nonumber\\
& \sum_{k\in\mathcal{K}}\mv{w}_k^H\mv{B}_i\mv{w}_k \le P_{{\rm max}, i}, \forall i\in\mathcal{N}.\label{eqn:p1:feasibility}
\end{align}
Problem (\ref{eqn:p1:feasibility}) has been solved by the standard convex optimization techniques via reformulating it as a second-order cone program (SOCP) \cite{WieselEldarShamai2006} or by the uplink-downlink duality based fixed point iteration algorithm \cite{Bengtsson1999,YuLan2007}. In the rest of this paper, we focus on the case that (P1) (or equivalently problem (\ref{eqn:p1:feasibility})) is feasible unless otherwise stated.

\section{Optimal Solution}\label{sec:optimal}

In this section, we present the optimal solution to problem (P1) by proposing an algorithm based on the Lagrange duality method \cite{BoydVandenberghe2004} and the uplink-downlink duality technique \cite{YuLan2007}. %Our proposed algorithm relies on the strong duality between problem (P1) and its dual problem, which is established based on the hidden convexity of (P1) (though (P1) itself is not convex), as will be shown later.

Let the dual variable associated with the $i$th power constraint in (\ref{eqn:8}) be denoted by $\mu_i\ge 0$, and that corresponding to the $i$th individual maximum transmit power constraint in (\ref{eqn:MaxPower}) by $\nu_i \ge 0$, $i\in\mathcal{N}$. Then we can express the partial Lagrangian of (P1) as
\begin{align}
&\mathcal{L}\left(\{\mv{w}_k\},\{G_{{\rm{b}},i}\},\{G_{{\rm{s}},i}\},\{\mu_i\},\{\nu_i\}\right) \nonumber \\
= & \sum_{k\in\mathcal{K}}\mv{w}_k^H \mv{B}_{\mu,\nu}\mv{w}_k + \sum_{i\in\mathcal{N}}
(\alpha_{{\rm b},i} - \mu_i)G_{{\rm{b}},i} + \sum_{i\in\mathcal{N}} (\mu_i - \alpha_{{\rm s},i})G_{{\rm{s}},i} \nonumber \\ &+ \sum_{i\in\mathcal{N}}(P_{{\rm c},i}-E_i)\mu_i -\sum_{i\in\mathcal{N}}P_{{\rm max},i}\nu_i,
\label{eqn:Lagangian}
\end{align}
where $\mv{B}_{\mu,\nu} \triangleq \sum_{i\in\mathcal{N}}(\mu_i+\nu_i)\mv{B}_i$.
Accordingly, the dual function is given by
\begin{align}
g(\{\mu_i\},\{\nu_i\}) =& \nonumber \\ \mathop{\mathtt{min}}\limits_{\{\mv{w}_k\},\{G_{{\rm{b}},i}\ge 0\},\{G_{{\rm{s}},i}\ge 0\}}&\mathcal{L}\left(\{\mv{w}_k\},\{G_{{\rm{b}},i}\},\{G_{{\rm{s}},i}\},\{\mu_i\},\{\nu_i\}\right)\nonumber\\
\mathtt{s.t.}~~~~~~~~~&\mathtt{SINR}_k(\{\mv{w}_k\})\ge \gamma_k, \forall k\in\mathcal{K},\label{eqn:dual_function}
\end{align}
and thus the dual problem is expressed as
\begin{align}
\mathrm{(D1):}\mathop{\mathtt{max}}\limits_{\{\mu_i\ge 0,\nu_i\ge 0\}}g(\{\mu_i\},\{\nu_i\}).\label{eqn:dual_problem}
\end{align}

Since (P1) itself is non-convex, in general only weak duality holds between (P1) and its dual problem (D1), that is, the optimal value achieved by (D1) is generally a lower bound on that of (P1) \cite{BoydVandenberghe2004}. However, due to the specific structure of (P1), we can show that strong duality indeed holds between (P1) and (D1), as stated in the following proposition.

\begin{proposition}\label{proposition:strong:duality}
The optimal value achieved by (D1) is equal to that by (P1).
\end{proposition}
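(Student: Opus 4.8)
The plan is to prove strong duality by routing through the semidefinite relaxation (SDR) of (P1), which acts as a convex bridge between (P1) and (D1). First I would introduce the lifted variables $\mv{W}_k \triangleq \mv{w}_k\mv{w}_k^H$ and $\mv{H}_k \triangleq \mv{h}_k\mv{h}_k^H$, so that each SINR constraint in (\ref{eqn:7:final}) becomes the linear inequality $\mathrm{Tr}(\mv{H}_k\mv{W}_k) \ge \gamma_k\big(\sum_{l\neq k}\mathrm{Tr}(\mv{H}_k\mv{W}_l) + \sigma_k^2\big)$, while the transmit powers in (\ref{eqn:8}) and (\ref{eqn:MaxPower}) become $\sum_k \mathrm{Tr}(\mv{B}_i\mv{W}_k)$. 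Dropping the implicit rank-one requirement and keeping only $\mv{W}_k \succeq \mv{0}$ yields a convex SDP, say (SDR1), whose optimal value lower-bounds that of (P1).

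The argument then splits into three parts. Part one: since (P1) is assumed feasible, a strictly feasible point for (SDR1) is readily exhibited (scale up a feasible beamformer to meet the QoS constraints with strict inequality, and buy extra grid energy to leave the power constraints slack), so Slater's condition holds and strong duality between (SDR1) and its Lagrange dual is guaranteed. Part two: I would verify that this Lagrange dual is exactly (D1). This follows because the partial Lagrangian in (\ref{eqn:Lagangian}) depends on the beamformers only through $\sum_k\mv{w}_k^H\mv{B}_{\mu,\nu}\mv{w}_k = \sum_k\mathrm{Tr}(\mv{B}_{\mu,\nu}\mv{W}_k)$, and the inner minimization over $G_{{\rm b},i},G_{{\rm s},i}\ge 0$ is identical in both formulations, being bounded below precisely when $\alpha_{{\rm s},i}\le\mu_i\le\alpha_{{\rm b},i}$ (otherwise the corresponding coefficient is negative and the dual function is $-\infty$). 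Hence the two dual functions, and their optima, coincide; combining parts one and two shows that the optimal value of (D1) equals that of (SDR1), which still only lower-bounds (P1).

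The crux, and the step I expect to be the main obstacle, is proving that (SDR1) is \emph{tight}, i.e.\ it attains its optimum at rank-one $\{\mv{W}_k\}$. I would attack this via the KKT conditions of (SDR1). Letting $\lambda_k\ge 0$ and $\mv{Z}_k\succeq\mv{0}$ denote the multipliers of the $k$th SINR constraint and of $\mv{W}_k\succeq\mv{0}$, stationarity in $\mv{W}_k$ reads $\mv{Z}_k = \mv{B}_{\mu,\nu} + \sum_{l\neq k}\lambda_l\gamma_l\mv{H}_l - \lambda_k\mv{H}_k$. The leading matrix $\mv{B}_{\mu,\nu} + \sum_{l\neq k}\lambda_l\gamma_l\mv{H}_l$ is positive definite: each diagonal block of $\mv{B}_{\mu,\nu}=\sum_i(\mu_i+\nu_i)\mv{B}_i$ carries weight $\mu_i+\nu_i\ge\mu_i\ge\alpha_{{\rm s},i}\ge\alpha_{\min}>0$ (the dual optimum must lie where $g$ is finite, forcing $\mu_i>0$), and the interference sum is positive semidefinite. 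Subtracting the rank-one term $\lambda_k\mv{H}_k$ lowers the rank by at most one, so $\mathrm{rank}(\mv{Z}_k)\ge MN-1$. Complementary slackness $\mv{Z}_k\mv{W}_k=\mv{0}$ then confines the range of $\mv{W}_k$ to the at-most-one-dimensional null space of $\mv{Z}_k$, forcing $\mathrm{rank}(\mv{W}_k)\le 1$; since $\sigma_k^2>0$ and $\gamma_k>0$ preclude $\mv{W}_k=\mv{0}$, the rank is exactly one. Such an optimizer factors as $\mv{W}_k=\mv{w}_k\mv{w}_k^H$ and is therefore feasible and optimal for (P1), so (P1) and (SDR1) share the same optimal value, which by parts one and two equals that of (D1). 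Notably, it is the energy-trading price floor $\alpha_{\min}>0$ that guarantees $\mv{B}_{\mu,\nu}\succ\mv{0}$, making the rank-reduction argument go through cleanly.
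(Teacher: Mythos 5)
Your proof is correct in its overall architecture, but it follows a genuinely different route from the paper's. The paper never lifts to matrices: it exploits the phase-rotation invariance of the SINR and power expressions to rewrite (P1) as an equivalent convex SOCP (P1-Ref), observes that (P1) and (P1-Ref) have the \emph{same} partial Lagrangian and hence the same dual function, and then invokes Slater's condition for (P1-Ref) to conclude $v_{\rm (D1)}=v_{\rm (D1-Ref)}=v_{\rm (P1-Ref)}=v_{\rm (P1)}$. That argument is shorter and avoids rank considerations entirely. Your SDR route is the classical Bengtsson--Ottersten-style argument: it costs you the KKT/rank-one machinery, but that machinery is sound --- the stationarity identity $\mv{Z}_k=\mv{B}_{\mu,\nu}+\sum_{l\neq k}\lambda_l\gamma_l\mv{h}_l\mv{h}_l^H-\lambda_k\mv{h}_k\mv{h}_k^H$, the positive definiteness of $\mv{B}_{\mu,\nu}$ forced by $\mu_i\ge\alpha_{{\rm s},i}\ge\alpha_{\min}>0$, and the rank bound via complementary slackness are all correct, and this is where your proof earns the tightness that the paper gets for free from the SOCP rewriting. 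Your approach is also more portable: it would survive modifications of (P1) for which no SOCP reformulation exists.

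Two soft spots deserve attention. First, Part two overclaims: the partial dual function of (SDR1) is an infimum over \emph{all} PSD $\{\mv{W}_k\}$ satisfying the linearized SINR constraints, whereas $g(\{\mu_i\},\{\nu_i\})$ in (\ref{eqn:dual_function}) is an infimum over the rank-one slice only, so ``the Lagrangian depends on the beamformers only through traces'' gives you $g_{\rm SDR}\le g$ pointwise, not equality (equality is itself a rank-one tightness statement for the inner subproblem, i.e., circular as justified). Fortunately only the inequality is needed: $v_{\rm (P1)}=v_{\rm (SDR1)}=v_{\rm (D\mbox{-}SDR)}\le v_{\rm (D1)}\le v_{\rm (P1)}$, the last step being weak duality, so your conclusion survives with this sandwich in place of the claimed equality. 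Second, your Slater certificate is strictly feasible in the wrong place: for the SDP the non-affine constraints are precisely the cone constraints $\mv{W}_k\succeq\mv{0}$, and a scaled-up beamformer yields a rank-one, hence boundary, point of that cone; strictness of the (affine) SINR inequalities is not what refined Slater requires. This is repairable by taking $\mv{W}_k=(1+\delta)\mv{w}_k^0\mv{w}_k^{0H}+\epsilon\mv{I}$ with $\epsilon\ll\delta$, which is positive definite, keeps the SINR constraints feasible, and respects (\ref{eqn:8}) by purchasing extra energy --- provided the max-power constraints (\ref{eqn:MaxPower}) are not tight at every feasible point, a degenerate case that, to be fair, the paper's own unproved appeal to Slater's condition for (P1-Ref) also silently excludes.
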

\begin{proof}
This proposition relies on the fact that (P1) can be recast as a convex optimization problem. It is observed that any phase rotation of $\{\mv{w}_k\}$ does not change the SINR in (\ref{eqn:7:final}) or the power consumption in (\ref{eqn:8}) and (\ref{eqn:MaxPower}), i.e., $\mathtt{SINR}_k(\{\mv{w}_k\}) = \mathtt{SINR}_k(\{\mv{w}_k e^{j \phi_k}\})$ and $\mv{w}_k^H\mv{B}_i\mv{w}_k = (\mv{w}_k e^{j \phi_k})^H\mv{B}_i(\mv{w}_k e^{j \phi_k}), \forall k\in\mathcal K, i\in \mathcal N$, where $\phi_k, k\in \mathcal K$, are arbitrary phases. Without loss of optimality, we can choose $\{\mv{w}_k\}$ such that $\mv{h}_k^H\mv{w}_k$ is real and $\mv{h}_k^H\mv{w}_k \ge 0, \forall k\in\mathcal{K}$. In this case, by denoting $\mv{W} = \left[\mv{w}_1,\ldots,\mv{w}_K\right]$, we can reformulate (P1) as
 \begin{align}
\mathrm{(P1}\mathrm{-Ref)}:&
\mathop{\mathtt{min}}\limits_{\{\mv{w}_k\},\{G_{{\rm{b}},i}\},\{G_{{\rm{s}},i}\}}~ \sum_{i\in\mathcal{N}}\left( \alpha_{{\rm b},i}G_{{\rm{b}},i} - \alpha_{{\rm s},i}G_{{\rm{s}},i}\right) \nonumber\\
\mathtt{s.t.}~~~~~~~~&\left\|\mv{h}_k^H\mv{W} ~\sigma_k\right\|\le \sqrt{1+\frac{1}{\gamma_k}}\mv{h}_k^H\mv{w}_k, \forall k\in\mathcal{K}\label{eqn:7:final:convex}\\
&(\ref{eqn:8}),~(\ref{eqn:MaxPower}), ~ {\rm and} ~ (\ref{eqn:11})\nonumber,
\end{align}
where $\left\|\mv{h}_k^H\mv{W} ~\sigma_k\right\| = \sqrt{\sum_{l\in\mathcal{K}}|\mv{h}_k^H\mv{w}_l|^2 + \sigma_{k}^2}, k\in\mathcal{K}$. Since the constraints in (\ref{eqn:7:final:convex}) specify a set of second-order cones and thus are convex \cite[Chapters 2.2.3 and 4.4.2]{BoydVandenberghe2004}, it is evident that problem (P1-Ref) is convex, given that its objective function and other constraints are all convex. Based on the equivalence between problem (P1) and its convex reformulation (P1-Ref), this proposition can be proved, for which the details are given in Appendix \ref{appendix:1}.
\end{proof}

According to Proposition \ref{proposition:strong:duality}, we can solve (P1) by equivalently solving (D1). In the following, we first solve the problem in (\ref{eqn:dual_function}) for obtaining $g(\{\mu_i\},\{\nu_i\})$ under any given $\{\mu_i\}$ and $\{\nu_i\}$ satisfying $\mu_i \ge 0$ and $\nu_i \ge 0, \forall i\in\mathcal{N}$, and then minimize $g(\{\mu_i\},\{\nu_i\})$ over $\{\mu_i\}$ and $\{\nu_i\}$.

\subsection{Solve Problem (\ref{eqn:dual_function}) for Obtaining $g(\{\mu_i\},\{\nu_i\})$}
First, we have the following lemma.
\begin{lemma}\label{lemma:1}
In order for $g(\{\mu_i\},\{\nu_i\}) $ to be bounded from below, i.e., $g(\{\mu_i\},\{\nu_i\})>-\infty$, it must hold that
\begin{align}
\alpha_{{\rm s},i} \le \mu_i \le \alpha_{{\rm b},i}, \forall i\in\mathcal{N}.
\end{align}
\end{lemma}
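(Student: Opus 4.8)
The plan is to exploit the fact that the energy-trading variables $\{G_{{\rm b},i}\}$ and $\{G_{{\rm s},i}\}$ enter the Lagrangian in (\ref{eqn:Lagangian}) only \emph{linearly}, and that they are completely decoupled from the SINR constraints in the inner minimization (\ref{eqn:dual_function}) that defines $g(\{\mu_i\},\{\nu_i\})$. First I would note that the constraints $\mathtt{SINR}_k(\{\mv{w}_k\})\ge \gamma_k$ involve only the beamforming vectors $\{\mv{w}_k\}$, whereas $\{G_{{\rm b},i}\}$ and $\{G_{{\rm s},i}\}$ are subject only to their own non-negativity constraints. Consequently the minimization over $\{G_{{\rm b},i}\ge 0\}$ and $\{G_{{\rm s},i}\ge 0\}$ separates from that over $\{\mv{w}_k\}$ and can be analyzed coefficient by coefficient.

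Isolating the portion of $\mathcal{L}$ depending on the energy-trading variables gives $\sum_{i\in\mathcal N}(\alpha_{{\rm b},i}-\mu_i)G_{{\rm b},i} + \sum_{i\in\mathcal N}(\mu_i-\alpha_{{\rm s},i})G_{{\rm s},i}$, which for each fixed $i$ is a linear function of the non-negative scalars $G_{{\rm b},i}$ and $G_{{\rm s},i}$. The elementary boundedness criterion for a linear map $c\,G$ on the ray $G\ge 0$ is that it is bounded below if and only if $c\ge 0$; if $c<0$ it tends to $-\infty$ as $G\to+\infty$. Applying this to the coefficient $(\alpha_{{\rm b},i}-\mu_i)$ of $G_{{\rm b},i}$ forces $\mu_i\le\alpha_{{\rm b},i}$, and applying it to the coefficient $(\mu_i-\alpha_{{\rm s},i})$ of $G_{{\rm s},i}$ forces $\mu_i\ge\alpha_{{\rm s},i}$, which together yield the claimed sandwich $\alpha_{{\rm s},i}\le\mu_i\le\alpha_{{\rm b},i}$. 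To make the necessity rigorous via the contrapositive, I would fix any $\{\mv{w}_k\}$ meeting the SINR constraints (such vectors exist under the standing feasibility assumption on (P1)) together with the other trading variables at zero, and then, whenever one of the coefficients is strictly negative, drive the corresponding $G_{{\rm b},i}$ or $G_{{\rm s},i}$ to $+\infty$; this sends $\mathcal{L}$ and hence $g(\{\mu_i\},\{\nu_i\})$ to $-\infty$.

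I do not anticipate a genuine obstacle, since the argument reduces to the boundedness criterion for linear functions on the non-negative orthant. The only point meriting care is to confirm that the feasible set for $\{\mv{w}_k\}$ is non-empty and entirely independent of the trading variables, so that the divergence induced by $G_{{\rm b},i}$ or $G_{{\rm s},i}$ cannot be counteracted through any coupling in the beamforming term $\sum_{k\in\mathcal K}\mv{w}_k^H\mv{B}_{\mu,\nu}\mv{w}_k$; once this separation is observed, the stated condition follows immediately as necessary for $g>-\infty$.
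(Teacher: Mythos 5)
Your proposal is correct and follows essentially the same route as the paper's proof in Appendix B: both arguments exploit the linearity of the Lagrangian in $G_{{\rm b},i}$ and $G_{{\rm s},i}$ and drive the appropriate variable to $+\infty$ whenever $\mu_i > \alpha_{{\rm b},i}$ or $\mu_i < \alpha_{{\rm s},i}$, forcing $g(\{\mu_i\},\{\nu_i\}) \to -\infty$. Your added remark on the non-emptiness of the beamforming feasible set (guaranteed by the standing feasibility assumption on (P1)) is a harmless refinement that the paper leaves implicit.
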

\begin{proof}
See Appendix \ref{appendix:2}.
\end{proof}
\begin{figure*}[t]
\centering
 \epsfxsize=1\linewidth
    \includegraphics[width=12cm]{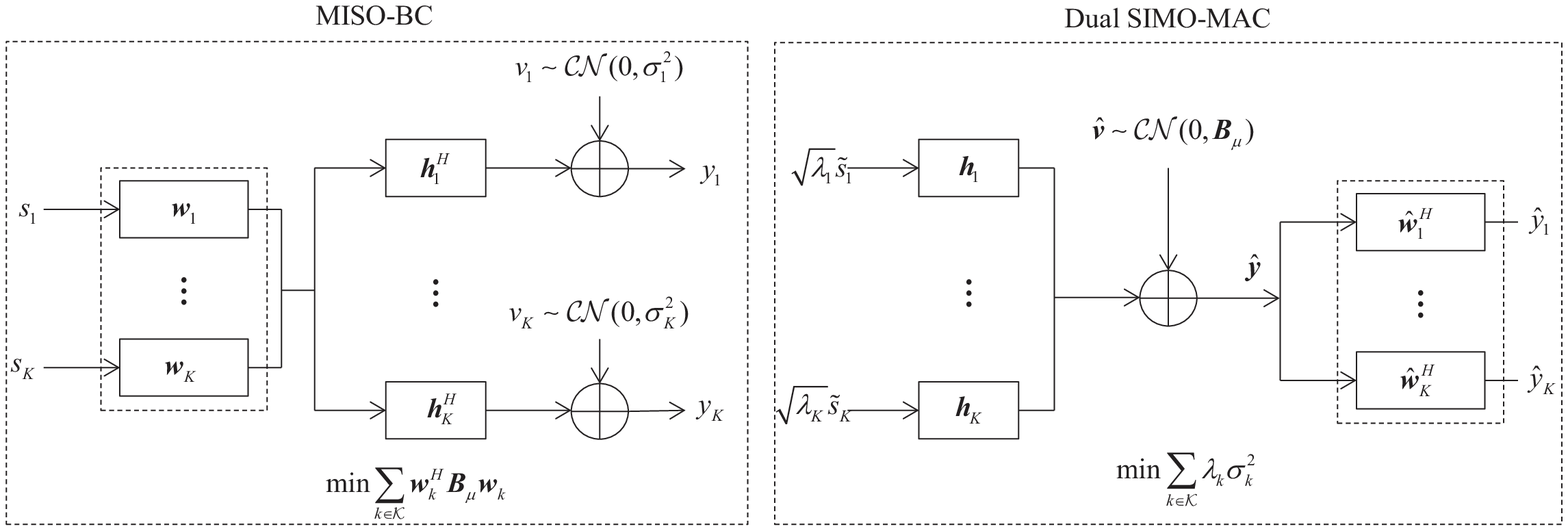}
\caption{Uplink-downlink duality for MISO-BC and SIMO-MAC.} \label{figure:bcmac}
\end{figure*}

According to Lemma \ref{lemma:1}, we only need to derive $g(\{\mu_i\},\{\nu_i\})$ for given $\{\mu_i\}$ and $\{\nu_i\}$ satisfying $0<\alpha_{{\rm s},i} \le \mu_i \le \alpha_{{\rm b},i}$ and $\nu_i \ge 0$, $\forall i\in\mathcal{N}$. In this case, we have $\mv{B}_{\mu,\nu} \succ \mv{0}$. Furthermore, we observe that the problem in (\ref{eqn:dual_function}) can be decomposed into the following $2N+1$ subproblems by dropping the irrelative term $\sum_{i\in\mathcal{N}}(P_{{\rm c},i}-E_i)\mu_i-\sum_{i\in\mathcal{N}}P_{{\rm max},i}\nu_i$:
\begin{align}
\mathop{\mathtt{min}}\limits_{\{\mv{w}_k\}}~&\sum_{k\in\mathcal{K}}\mv{w}_k^H \mv{B}_{\mu,\nu}\mv{w}_k\nonumber\\
\mathtt{s.t.}~~&\mathtt{SINR}_k(\{\mv{w}_k\})\ge \gamma_k, \forall k\in\mathcal{K}\label{eqn:dual_function1},\\
\mathop{\mathtt{min}}\limits_{G_{{\rm{b}},i}\ge 0}~&(\alpha_{{\rm b},i} - \mu_i)G_{{\rm{b}},i},~~\forall i\in\mathcal{N}\label{eqn:dual_function2},\\
\mathop{\mathtt{min}}\limits_{G_{{\rm{s}},i}\ge 0}~&(\mu_i - \alpha_{{\rm s},i})G_{{\rm{s}},i},~~\forall i\in\mathcal{N}\label{eqn:dual_function3},
\end{align}
where (\ref{eqn:dual_function2}) and (\ref{eqn:dual_function3}) each corresponds to $N$ subproblems (one for each BS $i$). For the subproblems in (\ref{eqn:dual_function2}) and (\ref{eqn:dual_function3}), it is easy to show that the optimal solutions are given by
\begin{align}
G_{{\rm{b}},i}^{\star} = G_{{\rm{s}},i}^{\star} = 0, \forall i\in\mathcal{N}.\label{eqn:30:solution}
\end{align}
Note that if $\mu_i = \alpha_{{\rm b},i}$ or $\mu_i = \alpha_{{\rm s},i}$ for any $i\in\mathcal{N}$, then the corresponding optimal solution of $G_{{\rm{b}},i}^{\star}$ or $G_{{\rm{s}},i}^{\star}$ in (\ref{eqn:30:solution}) is generally not unique and can take any nonnegative value. In this case, $G_{{\rm{b}},i}^{\star} = 0$ or $G_{{\rm{s}},i}^{\star} = 0$ is employed here for the purpose of  solving problem (\ref{eqn:dual_function2}) or (\ref{eqn:dual_function3}) to obtain the dual function only, while they may not be the optimal solution to the original problem (P1), as will be discussed later in Section \ref{sec:Minimize}.

Now, it only remains to solve problem (\ref{eqn:dual_function1}) with  $\mv{B}_{\mu,\nu} \succ \mv{0}$ for obtaining $g(\{\mu_i\},\{\nu_i\})$. To this end, we exploit the uplink-downlink duality as follows.

Problem (\ref{eqn:dual_function1}) can be viewed as a transmit beamforming problem for a multiple-input single-output broadcast channel (MISO-BC), as shown in the left sub-figure of Fig. \ref{figure:bcmac}, with the goal of minimizing the weighted sum-power $\sum_{k\in\mathcal{K}}\mv{w}_k^H\mv{B}_{\mu,\nu}\mv{w}_k$ at the transmitter subject to a set of SINR constraints $\{\gamma_k\}$. For the MISO-BC, its dual single-input multiple-output multiple-access channel (SIMO-MAC) is shown in the right sub-figure of Fig. \ref{figure:bcmac} by conjugating and transposing the channel vectors, where $K$ single-antenna transmitters send independent information to one common receiver with $MN$ antennas. For transmitter $k\in\mathcal{K}$, let $\lambda_k$ be its transmit power, $\tilde{s}_k$ denote its transmitted information signal with zero mean and unit variance, and $\mv{h}_k$ be its channel vector to the receiver in the dual SIMO-MAC. Then the received signal is expressed as
$\hat{\mv{y}} = \sum\limits_{k\in\mathcal{K}}\mv{h}_k\sqrt{\lambda_k}\tilde{s}_k + {\hat{\mv{v}}},$ where ${\hat{\mv{v}}}$ is a CSCG random vector with zero mean and covariance matrix $\mv{B}_{\mu,\nu}$ denoting the equivalent noise vector at the receiver, i.e., ${\hat{\mv{v}}}\sim \mathcal{CN}(\mv 0, \mv{B}_{\mu,\nu})$. By applying receive beamforming vector $\hat{\mv{w}}_k$'s, the SINRs of different users in the dual SIMO-MAC are then given by
\begin{align}\label{eqn:sinr:uplink}
&{\mathtt{SINR}}_k^{\mathrm{MAC}}(\{\hat{\mv{w}}_k,\lambda_k\}) \nonumber\\&= \frac{\lambda_k|\mv{h}_k^H\hat{\mv{w}}_k|^2}{ \sum_{l\in\mathcal{K},l\neq k} \lambda_l |\mv{h}_l^H\hat{\mv{w}}_k|^2 +  \hat{\mv{w}}_k^H\mv{B}_{\mu,\nu}\hat{\mv{w}}_k} ,\forall k \in \mathcal{K}.
\end{align}
The design objective for the dual SIMO-MAC is to minimize the weighted sum transmit power $\sum\limits_{k\in\mathcal{K}}\lambda_k\sigma_k^2$ by jointly optimizing the power allocation $\{\lambda_k\}$ and receive beamforming vectors $\{\hat{\mv{w}}_k\}$ subject to the same set of SINR constraints $\{\gamma_k\}$ as in the original MISO-BC given by (\ref{eqn:dual_function1}). We thus formulate the dual uplink problem as
\begin{align}
\mathop{\mathtt{min}}\limits_{\{\hat{\mv{w}}_k\},\{\lambda_k\ge 0\}} &  \sum_{k\in\mathcal{K}}\lambda_k\sigma_k^2 \nonumber \\
\mathtt{s.t.}~~~~& {\mathtt{SINR}}_k^{\mathrm{MAC}}(\{\hat{\mv{w}}_k,\lambda_k\})  \ge \gamma_k, \forall k\in\mathcal{K}.\label{eqn:dual_function3:uplink}
\end{align}
With $\mv{B}_{\mu,\nu} \succ \mv{0}$, it has been shown  in \cite{YuLan2007} that problems (\ref{eqn:dual_function1}) and (\ref{eqn:dual_function3:uplink}) are equivalent. Thus, we can solve the downlink problem (\ref{eqn:dual_function1}) by first solving the uplink problem (\ref{eqn:dual_function3:uplink}) and then mapping its solution to that of problem (\ref{eqn:dual_function1}), shown as follows.

First, consider the uplink problem (\ref{eqn:dual_function3:uplink}). Since it can be shown that the optimal solution of (\ref{eqn:dual_function3:uplink}) is always achieved when all the SINR constraints are met with equality \cite{YuLan2007}, it follows that the optimal uplink transmit power $\{\lambda^{\star}_k\}$ must be a fixed point solution of the following equations, and thus can be found via an iterative function evaluation procedure \cite{WieselEldarShamai2006}.
\begin{align}
\lambda_k^{\star}=\frac{1}{\left(1+\frac{1}{\gamma_k}\right) \mv{h}_k^H \left(\sum\limits_{l\in\mathcal{K}}\lambda_l^{\star}\mv{h}_l\mv{h}_l^H+\mv{B}_{\mu,\nu}  \right)^{-1}\mv{h}_k}, \forall k \in \mathcal{K}.\label{eqn:iterative:2}
\end{align}
With $\{\lambda^\star_k\}$ at hand, the optimal receive beamforming vector $\{\hat{\mv{w}}_k^\star\}$ can then be obtained based on the minimum-mean-squared-error (MMSE) principle as
\begin{align}
 \hat{{\mv{w}}}_k^{\star}~&=~\frac{\left(\sum\limits_{l\in\mathcal{K}}\lambda_l^{\star}\mv{h}_l\mv{h}_l^H +\mv{B}_{\mu,\nu} \right)^{-1}\mv{h}_k}{\left\|{\left(\sum\limits_{l\in\mathcal{K}}\lambda_l^{\star}\mv{h}_l\mv{h}_l^H +\mv{B}_{\mu,\nu}\right)^{-1}\mv{h}_k}\right\|}, \ \forall k \in \mathcal{K}. \label{eqn:iterative:1}
\end{align}

After obtaining the optimal solution of $\{\hat{\mv{w}}_k^\star\}$ and $\{\lambda_k^\star\}$ for the uplink problem (\ref{eqn:dual_function3:uplink}), we then map the solution to $\{{{\mv{w}}}_k^{\star}\}$ for the downlink problem (\ref{eqn:dual_function1}). As shown in \cite{YuLan2007}, $\{{{\mv{w}}}_k^{\star}\}$ and $\{\hat{\mv{w}}_k^\star\}$ are identical up to a certain scaling factor. Using this argument together with the fact that the optimal solution of (\ref{eqn:dual_function1}) is also attained with all the SINR constraints being tight similarly to that in problem (\ref{eqn:dual_function3:uplink}), it follows that $\{\mv{w}_k^\star\}$ can be obtained as $\mv{w}_k^\star = \sqrt{p_k^\star} \hat{\mv{w}}_k^\star, \forall k \in \mathcal{K}$, with $\mv{p}^\star = [p_1^\star,\ldots,p_{K}^\star]^T$ given by
\begin{align}
\mv{p}^\star = \bigg(\mv{I}-{\mv{D}}\left(\{\hat{\mv w}_k^\star,\gamma_k\}\right)\bigg)^{-1}\mv{u}\left(\{\hat{\mv w}_k^\star, \gamma_k\}\right),\label{solution}
\end{align}
where $[{\mv{D}}]_{kl}\left(\{\hat{\mv{w}}_k,\gamma_k\}\right) =\left\{\begin{array}{ll} 0, &
k=l \\ \frac{ \gamma_k|{\mv h}_k^H\hat{\mv w}_l|^2}{|{\mv h}_k^H\hat{\mv w}_k|^2}, & k\neq l \end{array} \right.$ and $\mv{u}\left(\{\hat{\mv{w}}_k,\gamma_k\}\right) = \left[\frac{\gamma_1\sigma_1^2}{|{\mv h}_1^H\hat{\mv w}_1|^2},\ldots,\frac{\gamma_{K}\sigma_{K}^2}{|{\mv h}_{K}^H\hat{\mv w}_{K}|^2}\right]^T$.

\subsection{Minimize $g(\{\mu_i\},\{\nu_i\})$ over $\{\mu_i\}$ and $\{\nu_i\}$}\label{sec:Minimize}

Up to now, we have obtained the optimal solution of $\{{{\mv{w}}}_i^{\star}\}, \{G_{{\rm{b}},i}^{\star}\}$, and $\{G_{{\rm{s}},i}^{\star}\}$ to the problem in (\ref{eqn:dual_function}) with given $\{\mu_i\}$ and $\{\nu_i\}$. Accordingly, the dual function $g(\{\mu_i\},\{\nu_i\}) $ has been obtained. Next, we solve problem (D1) by minimizing $g(\{\mu_i\},\{\nu_i\})$ over $\{\mu_i\}$ and $\{\nu_i\}$. Since $g(\{\mu_i\},\{\nu_i\})$ is convex but not necessarily differentiable, we can employ the ellipsoid method \cite{Boyd:ConvexII} to obtain the optimal $\{\mu_i^*\}$ and $\{\nu_i^*\}$ for (D1) by using the fact that the subgradients of $g(\{\mu_i\},\{\nu_i\})$ at given $\mu_i$ and $\nu_i$ can be shown to be $\sum_{k\in\mathcal{K}}\mv{w}_k^{\star H}\mv{B}_i\mv{w}_k^\star+ P_{{\rm c},i} - E_i - G_{{\rm{b}},i}^\star + G_{{\rm{s}},i}^\star = \sum_{k\in\mathcal{K}}\mv{w}_k^{\star H}\mv{B}_i\mv{w}_k^\star+ P_{{\rm c},i} - E_i$ and $\sum_{k\in\mathcal{K}}\mv{w}_k^{\star H}\mv{B}_i\mv{w}_k^\star - P_{\max,i}$, respectively, $i\in\mathcal{N}$.

With the obtained $\{\mu_i^*\}$ and $\{\nu_i^*\}$, the corresponding $\{{\mv{w}}_i^{\star}\}$  becomes the optimal transmit beamforming vectors for (P1), denoted by $\{\mv{w}^*_k\}$. However, the solutions of $\{G_{{\rm{b}},i}^{\star}\}$ and $\{G_{{\rm{s}},i}^{\star}\}$ given by (\ref{eqn:30:solution}) in general may not be the optimal solution to (P1), since they are not unique if $\alpha_{{\rm b},i} - \mu_i^* = 0$ or  $\mu_i^* - \alpha_{{\rm s},i} = 0$, for any $i\in\mathcal{N}$. Nevertheless, it can be easily checked that the optimal solution to (P1) is achieved when the constraints in (\ref{eqn:8}) are all met with equality. As a result, the optimal solution of $\{G_{{\rm{b}},i}^{*}\}$ and $\{G_{{\rm{s}},i}^{*}\}$ for (P1) can be obtained as
\begin{align}
G_{{\rm{b}},i}^*& = \left(\sum_{k\in\mathcal{K}}\mv{w}_k^{*H}\mv{B}_i\mv{w}_k^*+ P_{{\rm c},i} - E_i \right)^+, \forall i\in\mathcal{N}, \label{eqn:G_n1}\\
G_{{\rm{s}},i}^* &= \left(E_i-\sum_{k\in\mathcal{K}}\mv{w}_k^{*H}\mv{B}_i\mv{w}_k^*- P_{{\rm c},i} \right)^+, \forall i\in\mathcal{N}.\label{eqn:G_n2}
\end{align}
This result is intuitive, since if the amount of harvested energy by BS $i$, $E_i$, is smaller (or larger) than that of its consumed energy, $\sum_{k\in\mathcal{K}}\mv{w}_k^{*H}\mv{B}_i\mv{w}_k^*+ P_{{\rm c},i}$, then BS $i$ should purchase the insufficient energy (or sell the excess energy) from (or to) the grid.

To summarize, our proposed algorithm to solve (P1) is given in Table \ref{table2} as Algorithm 1.

\begin{table}[t]
\begin{center}
\caption{Algorithm for Solving Problem (P1)} \vspace{0.01cm}
 \hrule\label{table2}
\vspace{0.1cm} \textbf{Algorithm 1}  \vspace{0.1cm}
\hrule \vspace{0.01cm}
\begin{enumerate}
\item Initialize $\{\mu_i\}$ and $\{\nu_i\}$ with $\alpha_{{\rm s},i} \le \mu_i \le \alpha_{{\rm b},i}$ and $\nu_i\ge 0$, $\forall i\in\mathcal{N}$.
\item {\bf Repeat:}
    \begin{enumerate}
    \item Compute $\{\lambda_k^{\star}\}$ as a fixed point solution of (\ref{eqn:iterative:2}) by iterative function evaluation \cite{WieselEldarShamai2006}, and compute the uplink receive beamforming vectors $\{\hat{{\mv{w}}}_k^{\star}\}$ by (\ref{eqn:iterative:1}).
    \item Compute the downlink beamforming vectors as $\mv{w}_k^\star = \sqrt{p_k^\star} \hat{\mv{w}}_k^\star, \forall k \in \mathcal{K}$, with $\{p_k^{\star}\}$ given by (\ref{solution}).
    \item Compute the subgradients of $g(\{{\mu}_i\},\{{\nu}_i\})$ associated with $\mu_i$ and $\nu_i$ as $\sum_{k\in\mathcal{K}}\mv{w}_k^{\star H}\mv{B}_i\mv{w}_k^{\star} - E_i + P_{{\rm c},i}$ and $\sum_{k\in\mathcal{K}}\mv{w}_k^{\star H}\mv{B}_i\mv{w}_k^{\star} - P_{{\rm max},i}$, respectively, $i\in\mathcal{N}$, and then update $\{\mu_i\}$ and $\{\nu_i\}$ accordingly based on the ellipsoid method \cite{Boyd:ConvexII}, subject to $\alpha_{{\rm s},i} \le \mu_i \le \alpha_{{\rm b},i}$ and $\nu_i\ge 0$, $\forall i\in\mathcal{N}$.
    \end{enumerate}
\item {\bf Until} $\{\mu_i\}$ and $\{\nu_i\}$ all converge within a prescribed accuracy.
\item Set $\mv{w}_k^*=\mv{w}_k^\star,\forall k\in\mathcal{K}$.
\item Compute $\{G_{{\rm{b}},i}^{*}\}$ and $\{G_{{\rm{s}},i}^{*}\}$ given by (\ref{eqn:G_n1}) and (\ref{eqn:G_n2}).
\end{enumerate}
\vspace{0.01cm} \hrule \vspace{0.01cm}\label{algorithm:1}
\end{center}
\vspace{-0em}
\end{table}

\begin{remark}\label{remark:3.1}
For problem (P1), it follows that the optimal dual solution $\{\mu_i^*\}$ must satisfy that
\begin{eqnarray}\label{eqn:marginal_EnergyCost}
\mu_i^*\left\{\begin{array}{ll} = \alpha_{{\rm s},i}, & {\rm if}~G^*_{{\rm{b}},i}=0,
G^*_{{\rm{s}},i}>0, \\ \in [\alpha_{{\rm s},i},\alpha_{{\rm b},i}], &{\rm if}~
G^*_{{\rm{b}},i}=G^*_{{\rm{s}},i}=0,\\  = \alpha_{{\rm b},i}, &{\rm if}~
G^*_{{\rm{b}},i}>0,G^*_{{\rm{s}},i}=0, \end{array} \right. \forall i\in\mathcal N,
\end{eqnarray}
which can be explained intuitively by interpreting $\mu_i^*$  as the marginal energy cost for BS $i \in \mathcal N$. In other words, the marginal energy cost for BS $i$ (i.e., $\mu_i^*$) is equal to the unit energy selling/buying price $\alpha_{{\rm s},i}$/$\alpha_{{\rm b},i}$, if the BS sells/purchases energy to/from the grid (i.e., $G^*_{{\rm{s}},i}>0$ or $G^*_{{\rm{b}},i}>0$). Note that the set of marginal energy costs $\{\mu_i^*\}$ plays an important role in adjusting the cooperative transmit beamforming vectors at the $N$ BSs (cf. (\ref{eqn:iterative:1})). They allow the cooperative BSs to reallocate their power consumption pattern to follow the corresponding renewable generation profile, such that the BSs can better utilize the cheap renewable energy and thereby reduce the total energy cost, as will be validated in our simulation results in Section \ref{sec:numerical}.
\end{remark}

\section{Sub-optimal Solution with ZF Beamforming}\label{sec:suboptimal}

In the preceding section, we have obtained Algorithm 1 for optimally solving problem (P1). However, since Algorithm 1 requires both the inner fixed-point iteration and the outer ellipsoid iteration, it is in general of high implementation complexity. Therefore, we consider in this section a sub-optimal solution with lower complexity, which is based on cooperative ZF precoding/beamforming at the BSs \cite{WieselZF,ZhangRuiBD}. In this case, the transmit beamforming vectors should be designed to precancel any inter-user interference among different MTs, i.e., $\mv{h}_k^H \mv{w}_l = 0, \forall l,k\in\mathcal{K}, l\neq k$.{\footnote{It is assumed that all the channel vector $\mv{h}_k$'s are linearly independent in order for the ZF beamforming to be feasible.}} Note that the ZF beamforming is only applicable in the case when the number of MTs $K$ is no larger than the total number of transmit antennas at all $N$ BSs (i.e.,  $K\le MN$). Accordingly, the joint energy trading and cooperative ZF beamforming problem is formulated as
\begin{align}
\mathrm{(P1-ZF)}:&\mathop{\mathtt{min}}\limits_{\{\mv{w}_k\},\{G_{{\rm{b}},i}\},\{G_{{\rm{s}},i}\}}~  \sum_{i\in\mathcal{N}}\left( \alpha_{{\rm b},i}G_{{\rm{b}},i} - \alpha_{{\rm s},i}G_{{\rm{s}},i}\right)\nonumber\\
\mathtt{s.t.}~~&\frac{|\mv{h}_k^H \mv{w}_k|^2 }{\sigma_{k}^2}\ge \gamma_k, \forall k\in\mathcal{K}\label{eqn:SNR:zf}\\
&\mv{h}_k^H \mv{w}_l = 0, \forall l,k\in\mathcal{K}, l\neq k \label{eqn:zf}\\
&(\ref{eqn:8}),~(\ref{eqn:MaxPower}),~{\rm{and}}~(\ref{eqn:11}),\nonumber
\end{align}
where the SNR constraints in (\ref{eqn:SNR:zf}) are degenerated from the SINR constraints in (\ref{eqn:7:final}) due to the newly introduced ZF constraints in (\ref{eqn:zf}). It is worth noticing that with ZF beamforming, the BSs in general need higher transmit power to ensure the QoS constraints for all the MTs, as compared to the case with optimal beamforming in (P1). As a result, the feasibility of problem (P1-ZF) may not be always ensured under the condition that (P1) is feasible. Therefore, similar to problem (\ref{eqn:p1:feasibility}), we check the feasibility of (P1-ZF) by solving the following problem:
\begin{align}
\mathtt{find}~&  \{\mv{w}_k\} \nonumber\\
\mathtt{s.t.}~&\frac{|\mv{h}_k^H \mv{w}_k|^2 }{\sigma_{k}^2}\ge \gamma_k, \forall k\in\mathcal{K}\nonumber\\
&\mv{h}_k^H \mv{w}_l = 0, \forall l,k\in\mathcal{K}, l\neq k \nonumber\\
&\sum_{k\in\mathcal{K}}\mv{w}_k^H\mv{B}_i\mv{w}_k \le P_{{\rm max}, i}, \forall i\in\mathcal{N}.\label{eqn:p1ZF:feasibility}
\end{align}
Problem (\ref{eqn:p1ZF:feasibility}) has been solved in \cite{WieselZF}. In the rest of this section, we assume that problem (P1-ZF) (or equivalently problem (\ref{eqn:p1ZF:feasibility})) is feasible.

Next, we present the optimal solution to (P1-ZF) in closed-form. To start with, we first design the beamforming vectors $\{\mv{w}_k\}$ to remove the ZF constraints in (\ref{eqn:zf}) as follows. Define $\mv{H}_{-k} = \left[\mv{h}_{1}, \ldots, \mv{h}_{k-1}, \mv{h}_{k+1},\right.$ $\left.\ldots, \mv{h}_{K}\right]^H$ with $\mv{H}_{-k}\in\mathbb{C}^{(K-1)\times MN}$, $k\in\mathcal{K}$. Let the (reduced) singular value decomposition (SVD) of  $\mv{H}_{-k}$ be denoted as $\mv{H}_{-k} = \mv{U}_{k}\mv{\Sigma}_{k}\mv{V}_{k}^H$, where $ \mv{U}_{k}\in\mathbb{C}^{(K-1)\times (K-1)}$ with $ \mv{U}_{k} \mv{U}_{k}^H =  \mv{U}_{k}^H\mv{U}_{k} = \mv{I}$, $\mv{V}_{k}\in\mathbb{C}^{MN\times (K-1)}$ with $ \mv{V}_{k}^H\mv{V}_{k} = \mv{I}$, and $\mv{\Sigma}_{k}$ is a $(K-1)\times(K-1)$ diagonal matrix. Define the projection matrix $\mv{P}_{k} = \mv{I} - \mv{V}_{k}\mv{V}_{k}^H$. Without loss of generality, we can express $\mv{P}_{k} = \tilde{\mv{V}}_{k}\tilde{\mv{V}}_{k}^{H}$, where $\tilde{\mv{V}}_{k} \in \mathbb{C}^{MN \times (MN-K+1)}$ satisfies $\tilde{\mv{V}}_{k}^{H}\tilde{\mv{V}}_{k}=\mv{I}$ and ${\mv{V}}_{k}^{H}\tilde{\mv{V}}_{k}=\mv{0}$. Note that  $\left[{\mv{V}}_{k},\tilde{\mv{V}}_{k}\right]$ forms an $MN\times MN$ unitary matrix.  As a result, we have the following lemma.
\begin{lemma}\label{lemma:3}
The optimal ZF transmit beamforming vectors for problem (P1-ZF) are given as
\begin{align}
\mv{w}_{k} = \tilde{\mv{V}}_{k} \tilde{\mv{w}}_{k}, \forall k\in\mathcal{K},\label{eqn:optimalZF}
\end{align}
where $\tilde{\mv{w}}_k \in \mathbb{C}^{(MN-K+1)\times1}, \forall k\in\mathcal{K}$.
\end{lemma}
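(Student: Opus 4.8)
The plan is to show that Lemma~\ref{lemma:3} is purely a reparametrization of the zero-forcing (ZF) feasible set: the constraints in (\ref{eqn:zf}) are equivalent to requiring each $\mv{w}_k$ to lie in the null space of $\mv{H}_{-k}$, and that null space is exactly the column span of $\tilde{\mv{V}}_k$. Once this is established, every ZF-feasible beamforming vector---and in particular the optimal one for (P1-ZF)---must admit the stated form, so there is no loss of optimality in restricting to $\mv{w}_k = \tilde{\mv{V}}_k \tilde{\mv{w}}_k$.

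First I would regroup the ZF constraints by beamforming vector rather than by channel. Fixing the index $k$ of the beamformer $\mv{w}_k$, the relevant constraints in (\ref{eqn:zf}) are $\mv{h}_l^H \mv{w}_k = 0$ for all $l \in \mathcal{K}$ with $l \neq k$. By the definition $\mv{H}_{-k} = \left[\mv{h}_{1}, \ldots, \mv{h}_{k-1}, \mv{h}_{k+1}, \ldots, \mv{h}_{K}\right]^H$, stacking these $K-1$ scalar equations is exactly the single matrix equation $\mv{H}_{-k}\mv{w}_k = \mv{0}$. Thus the ZF constraints decouple across $k$ into $\mv{w}_k \in \mathrm{null}(\mv{H}_{-k})$, $\forall k\in\mathcal{K}$.

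Next I would identify this null space via the given SVD. Invoking the footnote assumption that the channel vectors $\{\mv{h}_k\}$ are linearly independent, $\mv{H}_{-k}$ has full row rank $K-1$, so $\mv{\Sigma}_k$ is invertible and the columns of $\mv{V}_k$ form an orthonormal basis of the $(K-1)$-dimensional row space of $\mv{H}_{-k}$. Since $\left[\mv{V}_k, \tilde{\mv{V}}_k\right]$ is an $MN\times MN$ unitary matrix and $\mv{V}_k^H \tilde{\mv{V}}_k = \mv{0}$, the columns of $\tilde{\mv{V}}_k$ span the orthogonal complement, i.e.\ the null space of $\mv{H}_{-k}$, of dimension $MN-K+1$. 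Equivalently, $\tilde{\mv{V}}_k \tilde{\mv{V}}_k^H = \mv{P}_k$ is the orthogonal projector onto $\mathrm{null}(\mv{H}_{-k})$. Hence $\mv{H}_{-k}\mv{w}_k = \mv{0}$ holds if and only if $\mv{w}_k = \mv{P}_k \mv{w}_k = \tilde{\mv{V}}_k\tilde{\mv{w}}_k$ for some $\tilde{\mv{w}}_k \in \mathbb{C}^{(MN-K+1)\times 1}$.

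Combining the two steps, every $\mv{w}_k$ satisfying the ZF constraints in (\ref{eqn:zf}) is of the form $\tilde{\mv{V}}_k\tilde{\mv{w}}_k$, and conversely every such vector satisfies (\ref{eqn:zf}); the optimal solution of (P1-ZF), being ZF-feasible, therefore takes this form, which proves the lemma. The only substantive point---and thus the main (though minor) obstacle---is the rank/dimension argument establishing that $\tilde{\mv{V}}_k$ spans the \emph{entire} null space of $\mv{H}_{-k}$, which rests squarely on the linear-independence assumption for $\{\mv{h}_k\}$; everything else is a direct consequence of the orthonormal structure of the SVD factors.
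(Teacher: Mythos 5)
Your proposal is correct and follows essentially the same route as the paper's proof: both reduce the ZF constraints in (\ref{eqn:zf}) to $\mv{H}_{-k}\mv{w}_k=\mv{0}$ and use the SVD structure together with the linear-independence assumption to conclude that $\mv{w}_k$ must lie in the span of the columns of $\tilde{\mv{V}}_k$. The only difference is presentational: the paper expands $\mv{w}_k$ in the unitary basis $\left[\mv{V}_k,\tilde{\mv{V}}_k\right]$ and shows the $\mv{V}_k$-coordinate $\bar{\mv{w}}_k$ is forced to zero by the full rank of $\mv{U}_k\mv{\Sigma}_k$, whereas you phrase the identical fact in null-space/projector language.
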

\begin{proof}
See Appendix \ref{appendix:3}.
\end{proof}

From Lemma \ref{lemma:3}, problem (P1-ZF) is accordingly reformulated as
\begin{align}
&\mathop{\mathtt{min}}\limits_{\{\tilde{\mv{w}}_k\},\{G_{{\rm{b}},i}\},\{G_{{\rm{s}},i}\}}~  \sum_{i\in\mathcal{N}}\left( \alpha_{{\rm b},i}G_{{\rm{b}},i} - \alpha_{{\rm s},i}G_{{\rm{s}},i}\right)\label{eqn:ZF:Reformulate:0}\\
&\mathtt{s.t.}~\frac{|\mv{h}_k^H\tilde{\mv{V}}_{k} \tilde{\mv{w}}_{k}|^2}{\sigma_{k}^2}\ge \gamma_k, \forall k\in\mathcal{K}\label{eqn:8:ZF:C1}\\
&~~~~~\sum_{k\in\mathcal{K}}\tilde{\mv{w}}_{k}^H\tilde{\mv{V}}_{k} ^H\mv{B}_i\tilde{\mv{V}}_{k} \tilde{\mv{w}}_{k}+ P_{{\rm c},i} \le E_i + G_{{\rm{b}},i} - G_{{\rm{s}},i}, \forall i\in\mathcal{N}
\label{eqn:8:ZF}\\
&~~~~~\sum_{k\in\mathcal{K}}\tilde{\mv{w}}_{k}^H\tilde{\mv{V}}_{k} ^H\mv{B}_i\tilde{\mv{V}}_{k} \tilde{\mv{w}}_{k}\le P_{{\rm max},i}, \forall i\in\mathcal{N}
\label{eqn:MaxPower:ZF}\\
&~~~~~(\ref{eqn:11}).\nonumber
\end{align}

Problem (\ref{eqn:ZF:Reformulate:0}) is non-convex due to the non-convexity of constraint (\ref{eqn:8:ZF:C1}). Nevertheless, due to the fact that any phase rotation of ${\mv{w}}_{k}$ does not change the SNR at MT $k\in\mathcal{K}$ in (\ref{eqn:8:ZF:C1}) and the power constraints in (\ref{eqn:8:ZF}), we can choose ${\mv{w}}_{k}$ such that $\mv{h}_k^H{\mv{w}}_{k}$ is real and $\mv{h}_k^H{\mv{w}}_{k} \ge 0, \forall k\in\mathcal{K}$. Accordingly, problem (\ref{eqn:ZF:Reformulate:0}) can be reformulated as the following convex optimization problem:
\begin{align}
\mathop{\mathtt{min}}\limits_{\{\tilde{\mv{w}}_k\},\{G_{{\rm{b}},i}\},\{G_{{\rm{s}},i}\}}~&  \sum_{i\in\mathcal{N}}\left( \alpha_{{\rm b},i}G_{{\rm{b}},i} - \alpha_{{\rm s},i}G_{{\rm{s}},i}\right)\label{eqn:ZF:Reformulate}\\
\mathtt{s.t.}~~~~~~~&\mv{h}_k^H\tilde{\mv{V}}_{k} \tilde{\mv{w}}_{k}\ge \sigma_{k}\sqrt{\gamma_k}, \forall k\in\mathcal{K}\label{eqn:zf:reformulation}\\
&(\ref{eqn:8:ZF}),(\ref{eqn:MaxPower:ZF}),~{\rm and}~(\ref{eqn:11}).\nonumber
\end{align}
By applying the Lagrange duality method \cite{BoydVandenberghe2004}, we can have the closed-form optimal solution to problem (\ref{eqn:ZF:Reformulate}), as stated in the following proposition.
\begin{proposition}\label{proposition:P5}
The optimal solution to problem (\ref{eqn:ZF:Reformulate}) is given by
\begin{align}
\tilde{\mv{w}}_k^{**} = & \frac{\sigma_{k}\sqrt{\gamma_k}}{\left|\mv{h}_k^H\tilde{\mv{V}}_{k}\left(\tilde{\mv{V}}_{k} ^H\mv{B}_{\mu^{**},\nu^{**}}\tilde{\mv{V}}_{k}\right)^{-1}\tilde{\mv{V}}_{k}^H\mv{h}_k\right|}\nonumber\\
&\cdot  \left(\tilde{\mv{V}}_{k} ^H\mv{B}_{\mu^{**},\nu^{**}}\tilde{\mv{V}}_{k}\right)^{-1}\tilde{\mv{V}}_{k}^H\mv{h}_k , \forall k\in\mathcal{K}\label{eqn:w_k_P5}\\
G_{{\rm{b}},i}^{**} = & \left(\sum_{k\in\mathcal{K}}\tilde{\mv{w}}_k^{**H}\tilde{\mv{V}}_{k}^H\mv{B}_i\tilde{\mv{V}}_{k}\tilde{\mv{w}}_k^{**}+ P_{{\rm c},i} - E_i\right)^+, \forall i\in\mathcal{N} \label{eqn:G_n1:P5} \\
G_{{\rm{s}},i}^{**} =& \left(E_i -\sum_{k\in\mathcal{K}}\tilde{\mv{w}}_k^{**H}\tilde{\mv{V}}_{k}^H\mv{B}_i\tilde{\mv{V}}_{k}\tilde{\mv{w}}_k^{**}- P_{{\rm c},i}\right)^+ , \forall i\in\mathcal{N},\label{eqn:G_n2:P5}
\end{align}
where $\mv{B}_{\mu^{**},\nu^{**}} \triangleq \sum_{i\in\mathcal{N}}(\mu_i^{**}+\nu_i^{**})\mv{B}_i \succ \mv{0}$ with $\{\mu_i^{**}\}$ and $\{\nu_i^{**}\}$ each denoting $N$ non-negative constants (dual variables) corresponding to the power constraints in (\ref{eqn:8:ZF}) and (\ref{eqn:MaxPower:ZF}), respectively. The dual variables $\{\mu_i^{**}\}$ and $\{\nu_i^{**}\}$ can be obtained by solving the dual problem of (\ref{eqn:ZF:Reformulate}) via the ellipsoid method,{\footnote{Please refer to Appendix \ref{appendix:4_1} for the detailed algorithm for obtaining $\{\mu_i^{**}\}$ and $\{\nu_i^{**}\}$.}} with $0<\alpha_{{\rm s}, i} \le \mu^{**}_i \le \alpha_{{\rm b}, i}$ and $\nu_i^{**}\ge 0$, $\forall i\in\mathcal{N}$.
\end{proposition}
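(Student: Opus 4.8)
The plan is to exploit the convexity of problem (\ref{eqn:ZF:Reformulate}) and solve it via Lagrange duality, in close parallel to the development in Section \ref{sec:optimal}. Since (\ref{eqn:ZF:Reformulate}) is convex and assumed feasible (so Slater's condition holds), strong duality applies and the KKT conditions are both necessary and sufficient for optimality. First I would dualize only the two families of power constraints (\ref{eqn:8:ZF}) and (\ref{eqn:MaxPower:ZF}), introducing multipliers $\mu_i \ge 0$ and $\nu_i \ge 0$ respectively, while retaining the QoS constraints (\ref{eqn:zf:reformulation}) and the nonnegativity constraints (\ref{eqn:11}) inside the inner minimization. Collecting terms, the partial Lagrangian takes the form $\sum_{k\in\mathcal{K}} \tilde{\mv{w}}_k^H \tilde{\mv{V}}_k^H \mv{B}_{\mu,\nu} \tilde{\mv{V}}_k \tilde{\mv{w}}_k + \sum_{i\in\mathcal{N}}(\alpha_{{\rm b},i}-\mu_i)G_{{\rm b},i} + \sum_{i\in\mathcal{N}}(\mu_i - \alpha_{{\rm s},i})G_{{\rm s},i} + \sum_{i\in\mathcal{N}}(P_{{\rm c},i}-E_i)\mu_i - \sum_{i\in\mathcal{N}}P_{{\rm max},i}\nu_i$, with $\mv{B}_{\mu,\nu}$ as already defined. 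The inner problem therefore decouples into $K$ independent beamforming subproblems in the $\tilde{\mv{w}}_k$'s plus $2N$ trivial linear subproblems in the $G_{{\rm b},i}$'s and $G_{{\rm s},i}$'s.

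Second, by the same argument as in Lemma \ref{lemma:1}, the linear subproblems in $G_{{\rm b},i}$ and $G_{{\rm s},i}$ are bounded below only if $\alpha_{{\rm b},i}-\mu_i \ge 0$ and $\mu_i - \alpha_{{\rm s},i}\ge 0$, which forces $0 < \alpha_{{\rm s},i} \le \mu_i \le \alpha_{{\rm b},i}$ and in turn guarantees $\mv{B}_{\mu,\nu}\succ\mv{0}$ (since each $\mu_i + \nu_i \ge \mu_i \ge \alpha_{{\rm s},i} > 0$). Consequently $\tilde{\mv{V}}_k^H \mv{B}_{\mu,\nu}\tilde{\mv{V}}_k \succ \mv{0}$ for every $k$, as $\tilde{\mv{V}}_k$ has full column rank, so each beamforming subproblem $\min_{\tilde{\mv{w}}_k} \tilde{\mv{w}}_k^H \tilde{\mv{V}}_k^H \mv{B}_{\mu,\nu} \tilde{\mv{V}}_k \tilde{\mv{w}}_k$ subject to $\mv{h}_k^H \tilde{\mv{V}}_k \tilde{\mv{w}}_k \ge \sigma_k\sqrt{\gamma_k}$ is a strictly convex quadratic minimized under a single real linear constraint. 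Writing $\mv{a}_k \triangleq \tilde{\mv{V}}_k^H \mv{h}_k$ and applying the Cauchy--Schwarz inequality in the $\tilde{\mv{V}}_k^H \mv{B}_{\mu,\nu} \tilde{\mv{V}}_k$-weighted inner product gives the lower bound $\tilde{\mv{w}}_k^H \tilde{\mv{V}}_k^H \mv{B}_{\mu,\nu} \tilde{\mv{V}}_k \tilde{\mv{w}}_k \ge (\sigma_k^2\gamma_k)/(\mv{a}_k^H (\tilde{\mv{V}}_k^H \mv{B}_{\mu,\nu} \tilde{\mv{V}}_k)^{-1}\mv{a}_k)$, with equality precisely when $\tilde{\mv{w}}_k$ is aligned with $(\tilde{\mv{V}}_k^H \mv{B}_{\mu,\nu} \tilde{\mv{V}}_k)^{-1}\mv{a}_k$ and the QoS constraint is active; rescaling to meet the constraint with equality yields exactly (\ref{eqn:w_k_P5}).

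Third, after obtaining the optimal dual variables $\{\mu_i^{**}\}$ and $\{\nu_i^{**}\}$ by maximizing the concave, possibly nonsmooth dual function over the box $\alpha_{{\rm s},i}\le \mu_i \le \alpha_{{\rm b},i}$, $\nu_i \ge 0$ with the ellipsoid method (the subgradients being the power-constraint residuals, deferred to Appendix \ref{appendix:4_1}), the recovered $\tilde{\mv{w}}_k^{**}$ become primal-optimal. It then only remains to pin down the trading variables: since reducing $G_{{\rm b},i}$ or increasing $G_{{\rm s},i}$ strictly lowers the objective whenever constraint (\ref{eqn:8:ZF}) is slack, any optimal solution must satisfy (\ref{eqn:8:ZF}) with equality, which together with (\ref{eqn:11}) and $\alpha_{{\rm s},i}\le\alpha_{{\rm b},i}$ yields the complementary pair (\ref{eqn:G_n1:P5})--(\ref{eqn:G_n2:P5}), exactly as in (\ref{eqn:G_n1})--(\ref{eqn:G_n2}). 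The main technical step is the per-user quadratic minimization together with the verification that the dualization keeps $\mv{B}_{\mu,\nu}\succ\mv{0}$, so that the inverse appearing in (\ref{eqn:w_k_P5}) is well defined; the remaining pieces (decoupling, boundedness of the dual function, and tightness of the power constraint) are direct adaptations of the arguments already used for (P1).
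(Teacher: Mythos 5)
Your proposal is correct and follows essentially the same route as the paper: dualize the power constraints (\ref{eqn:8:ZF}) and (\ref{eqn:MaxPower:ZF}), invoke the boundedness argument of Lemma \ref{lemma:1} to confine $\mu_i$ to $[\alpha_{{\rm s},i},\alpha_{{\rm b},i}]$ (hence $\mv{B}_{\mu,\nu}\succ\mv{0}$), solve the decoupled per-user quadratic subproblems in closed form, recover the dual optimum by the ellipsoid method, and fix the trading variables from tightness of (\ref{eqn:8:ZF}). The only cosmetic difference is that you obtain the per-user solution via weighted Cauchy--Schwarz, whereas the paper's Lemma \ref{lemma:4.3} uses the change of variables $\tilde{\mv{w}}_k = \left(\tilde{\mv{V}}_{k}^H\mv{B}_{\mu,\nu}\tilde{\mv{V}}_{k}\right)^{-1/2}\tilde{\mv{w}}'_k$ to reduce to a minimum-norm problem; these are equivalent arguments.
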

\begin{proof}
See Appendix \ref{appendix:4_1}.
\end{proof}

By combining Proposition \ref{proposition:P5}  and Lemma \ref{lemma:3}, we can then obtain the optimal solution to problem (P1-ZF) as follows.
\begin{proposition}\label{corollary:P4}
The optimal solution to problem (P1-ZF) is given as
\begin{align}
{\mv{w}}_k^{**} = & \frac{\sigma_{k}\sqrt{\gamma_k}}{\left|\mv{h}_k^H\tilde{\mv{V}}_{k}\left(\tilde{\mv{V}}_{k} ^H\mv{B}_{\mu^{**},\nu^{**}}\tilde{\mv{V}}_{k}\right)^{-1}\tilde{\mv{V}}_{k}^H\mv{h}_k\right|} \nonumber\\& \cdot  \tilde{\mv{V}}_{k}\left(\tilde{\mv{V}}_{k} ^H\mv{B}_{\mu^{**},\nu^{**}}\tilde{\mv{V}}_{k}\right)^{-1}\tilde{\mv{V}}_{k}^H\mv{h}_k , \forall k\in\mathcal{K}\label{eqn:w_k_P4}\\
G_{{\rm{b}},i}^{**} =& \left(\sum_{k\in\mathcal{K}}{\mv{w}}_k^{**H}\mv{B}_i{\mv{w}}_k^{**}+ P_{{\rm c},i} - E_i\right)^+, \forall i\in\mathcal{N} \label{eqn:G_n1:P4} \\
G_{{\rm{s}},i}^{**} = &\left(E_i -\sum_{k\in\mathcal{K}}{\mv{w}}_k^{**H}\mv{B}_i{\mv{w}}_k^{**}- P_{{\rm c},i}\right)^+ , \forall i\in\mathcal{N}.\label{eqn:G_n2:P4}
\end{align}
\end{proposition}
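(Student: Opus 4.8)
The plan is to obtain Proposition \ref{corollary:P4} simply by composing Lemma \ref{lemma:3} with Proposition \ref{proposition:P5}, since the latter already solves the problem in the reduced variables $\{\tilde{\mv{w}}_k\}$ while the former supplies the linear map back to the original beamformers. First I would invoke Lemma \ref{lemma:3} to assert that, without loss of optimality, every optimal ZF beamforming vector for (P1-ZF) can be written as $\mv{w}_k = \tilde{\mv{V}}_k \tilde{\mv{w}}_k$ for some $\tilde{\mv{w}}_k \in \mathbb{C}^{(MN-K+1)\times 1}$, so that (P1-ZF) becomes equivalent to the reformulated problem (\ref{eqn:ZF:Reformulate:0}), and after the phase-rotation normalization to the convex problem (\ref{eqn:ZF:Reformulate}). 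This correspondence is one-to-one on the feasible sets, because $\tilde{\mv{V}}_k^H \tilde{\mv{V}}_k = \mv{I}$ makes the map $\tilde{\mv{w}}_k \mapsto \tilde{\mv{V}}_k \tilde{\mv{w}}_k$ injective, and it preserves the objective value since the objective depends only on $\{G_{{\rm b},i}, G_{{\rm s},i}\}$. Hence an optimal solution of (\ref{eqn:ZF:Reformulate}) maps to an optimal solution of (P1-ZF) and conversely.

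Next I would substitute the closed-form optimizer $\tilde{\mv{w}}_k^{**}$ of Proposition \ref{proposition:P5} into $\mv{w}_k^{**} = \tilde{\mv{V}}_k \tilde{\mv{w}}_k^{**}$. Left-multiplying the expression (\ref{eqn:w_k_P5}) by $\tilde{\mv{V}}_k$ moves the factor $\tilde{\mv{V}}_k$ inside the scalar-times-vector product and yields exactly (\ref{eqn:w_k_P4}); the scalar normalization constant is untouched since it involves only $\mv{h}_k$, $\tilde{\mv{V}}_k$, and $\mv{B}_{\mu^{**},\nu^{**}}$, none of which is altered by the mapping. For the energy-trading variables I would then use the algebraic identity induced by the same substitution, namely
\begin{align}
\mv{w}_k^{**H}\mv{B}_i\mv{w}_k^{**} = \tilde{\mv{w}}_k^{**H}\tilde{\mv{V}}_k^H\mv{B}_i\tilde{\mv{V}}_k\tilde{\mv{w}}_k^{**}, \quad \forall i\in\mathcal{N},\,k\in\mathcal{K},
\end{align}
so that the per-BS consumed transmit power $\sum_{k\in\mathcal{K}}\tilde{\mv{w}}_k^{**H}\tilde{\mv{V}}_k^H\mv{B}_i\tilde{\mv{V}}_k\tilde{\mv{w}}_k^{**}$ appearing in (\ref{eqn:G_n1:P5}) and (\ref{eqn:G_n2:P5}) equals $\sum_{k\in\mathcal{K}}\mv{w}_k^{**H}\mv{B}_i\mv{w}_k^{**}$. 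Substituting this into (\ref{eqn:G_n1:P5}) and (\ref{eqn:G_n2:P5}) reproduces precisely (\ref{eqn:G_n1:P4}) and (\ref{eqn:G_n2:P4}).

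Since the one-to-one correspondence established in the first step preserves both feasibility and the objective value, global optimality for (\ref{eqn:ZF:Reformulate}) transfers to global optimality for (P1-ZF), which completes the argument. I expect the proof to contain no genuinely hard calculation; the only step requiring care is confirming that the reduction of Lemma \ref{lemma:3} combined with the phase-rotation normalization is truly without loss of optimality, so that the closed-form minimizer of the reduced convex problem (\ref{eqn:ZF:Reformulate}) corresponds to a global optimum of the original (P1-ZF) rather than merely a feasible stationary point. Given that this was already justified in proving Lemma \ref{lemma:3} and the convexity of (\ref{eqn:ZF:Reformulate}), the present proposition follows as a direct consequence.
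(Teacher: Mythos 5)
Your proposal is correct and follows essentially the same route as the paper: the paper obtains Proposition \ref{corollary:P4} precisely by combining Lemma \ref{lemma:3} (which reduces (P1-ZF) to problem (\ref{eqn:ZF:Reformulate:0}) and, after the phase-rotation argument, to the convex problem (\ref{eqn:ZF:Reformulate})) with the closed-form solution of Proposition \ref{proposition:P5}, then mapping back via $\mv{w}_k^{**}=\tilde{\mv{V}}_k\tilde{\mv{w}}_k^{**}$. Your additional remarks on injectivity of the map and the identity $\mv{w}_k^{**H}\mv{B}_i\mv{w}_k^{**}=\tilde{\mv{w}}_k^{**H}\tilde{\mv{V}}_k^H\mv{B}_i\tilde{\mv{V}}_k\tilde{\mv{w}}_k^{**}$ merely make explicit what the paper leaves implicit.
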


Note that $\mu^{**}_i$ can be interpreted as the marginal energy cost for BS $i \in \mathcal N$, similar to that in Remark \ref{remark:3.1}. Therefore, based on $\{\mu^{**}_i\}$, the optimal ZF beamforming vectors $\{{\mv{w}}_k^{**}\}$ are adjusted such that the power consumption pattern among the $N$ BSs can efficiently follow the corresponding renewable generation profile for saving the total energy cost.

Finally, we make a complexity comparison between the sub-optimal solution by solving (P1-ZF) (i.e., Proposition \ref{corollary:P4}) versus the optimal solution by solving (P1) (i.e., Algorithm 1). Since the ellipsoid method is applied to obtain the optimal dual variables $\{\mu^{**}_i\}$ and $\{\nu^{**}_i\}$ for (P1-ZF) as well as $\{\mu^{*}_i\}$ and $\{\nu^{*}_i\}$ for (P1), a cutting-plane based iteration is required for both solutions, which has the complexity of order $\mathcal O(N^2)$ \cite{Boyd:ConvexII}. Nevertheless, within each iteration, the sub-optimal solution only requires to calculate one single expression in closed-form (cf. (\ref{eqn:w_k}) in Appendix \ref{appendix:4_1}), the complexity of which is of order $\mathcal O(KM^3N^3)$; while the optimal solution needs to employ an inner fixed-point iteration (i.e., (\ref{eqn:iterative:2})) to update the transmit power and beamforming vectors (cf. Algorithm 1), where the computational complexity of one inner iteration is of order $\mathcal O(KM^3N^3)$.{\footnote{It is difficult to analytically quantify the number of iterations required for each inner fixed-point iteration to converge. In general, the number of required iterations increases with $K$, $M$ and $N$.}} As a result, the sub-optimal solution significantly reduces the complexity as compared to the optimal solution.

\section{Simulation Results}\label{sec:numerical}

In this section, we provide simulation results to evaluate the performance of our proposed schemes with joint energy trading and communication cooperation based on the optimal solution in Section \ref{sec:optimal}, as well as the sub-optimal solution in Section \ref{sec:suboptimal} with ZF beamforming. For comparison, we also consider two conventional schemes that separately design communication cooperation and energy trading at BSs.

\subsection{Conventional Schemes}\label{Sec:Conventional}

\subsubsection{Optimal Beamforming to Minimize Sum-Power} In this scheme, the $N$ BSs first design the cooperative transmit beamforming so as to minimize their transmit sum-power subject to the QoS constraints given in (\ref{eqn:7:final}), for which the problem is formulated as
\begin{align}
\mathop{\mathtt{min}}\limits_{\{\mv{w}_k\}}~&  \sum_{k\in\mathcal{K}}\|\mv{w}_k\|^2\nonumber \\
\mathtt{s.t.}~~&\mathtt{SINR}_k(\{\mv{w}_k\})\ge \gamma_k, \forall k\in\mathcal{K}\nonumber\\
& \sum_{k\in\mathcal{K}}\mv{w}_k^H\mv{B}_i\mv{w}_k \le P_{{\rm max}, i}, \forall i\in\mathcal{N}.\label{eqn:OPTbeamforming}
\end{align}
Problem (\ref{eqn:OPTbeamforming}) can be solved by reformulating it as an SOCP\cite{WieselEldarShamai2006} or by the uplink-downlink duality based fixed point iteration algorithm \cite{YuLan2007}. Let the optimal beamforming solution to (\ref{eqn:OPTbeamforming}) be denoted by $\{\dot{\mv{w}}_i\}$. Next, each of the $N$ BSs sets the energy trading solution with the grid independently as $\dot G_{{\rm{b}},i} = \left(\sum\limits_{k\in\mathcal{K}}\dot{\mv{w}}_k^{H}\mv{B}_i\dot{\mv{w}}_k+ P_{{\rm c},i} - E_i\right)^+$ and $\dot G_{{\rm{s}},i}= \left(E_i-\sum\limits_{k\in\mathcal{K}}\dot{\mv{w}}_k^{{}H}\mv{B}_i\dot{\mv{w}}_k- P_{{\rm c},i}\right)^+ , \forall i\in\mathcal{N}$.
\subsubsection{ZF Beamforming to Minimize Sum-Power}
In this scheme, the $N$ BSs first cooperatively design their transmit ZF beamforming to minimize the transmit sum-power subject to the QoS constraints, for which the problem is formulated as
\begin{align}
\mathop{\mathtt{min}}\limits_{\{\mv{w}_k\}}~&  \sum_{k\in\mathcal{K}}\|\mv{w}_k\|^2 \nonumber \\
\mathtt{s.t.}~~&\frac{|\mv{h}_k^H \mv{w}_k|^2 }{\sigma_{k}^2}\ge \gamma_k, \forall k\in\mathcal{K}\nonumber\\
&\mv{h}_k^H \mv{w}_l = 0, \forall l,k\in\mathcal{K}, l\neq k\nonumber\\
& \sum_{k\in\mathcal{K}}\mv{w}_k^H\mv{B}_i\mv{w}_k \le P_{{\rm max}, i}, \forall i\in\mathcal{N}.\label{eqn:ZFbeamforming}
\end{align}
Problem (\ref{eqn:OPTbeamforming}) has been optimally solved in \cite{WieselZF,ZhangRuiBD}, for which the optimal ZF beamforming solution is denoted by $\{\ddot{\mv{w}}_k\}$. Next, each BS independently sets the energy trading solution as $\ddot G_{{\rm{b}},i} = \left(\sum_{k\in\mathcal{K}}\ddot{\mv{w}}_k^{ H}\mv{B}_i\ddot{\mv{w}}_k+ P_{{\rm c},i} - E_i\right)^+ $ and $\ddot G_{{\rm{s}},i}= \left(-\sum_{k\in\mathcal{K}}\ddot{\mv{w}}_k^{H}\mv{B}_i\ddot{\mv{w}}_k- P_{{\rm c},i} + E_i \right)^+ , \forall i\in\mathcal{N}$.

\begin{remark}
It is interesting to point out that for the special case of equal energy buying and selling prices for all BSs, i.e., $\alpha_{{\rm s},i}=\alpha_{{\rm b},i}=\alpha, \forall i\in\mathcal{N}$, the above conventional designs with optimal and ZF beamforming become the same beamforming and energy trading solutions to problems (P1) and (P1-ZF), respectively. However, for the general case of $\alpha_{{\rm s},i}<\alpha_{{\rm b},i}$ for any $i\in\mathcal N$, the conventional designs with optimal and ZF beamforming yield only suboptimal solutions for (P1) and (P1-ZF), respectively, as will be shown by simulation results next.
\end{remark}

\subsection{Performance Comparison}

\begin{figure*}
\centering
\subfigure[BS 1]{
\begin{minipage}[t]{0.31\linewidth}
\centering
\includegraphics[width=6.3cm]{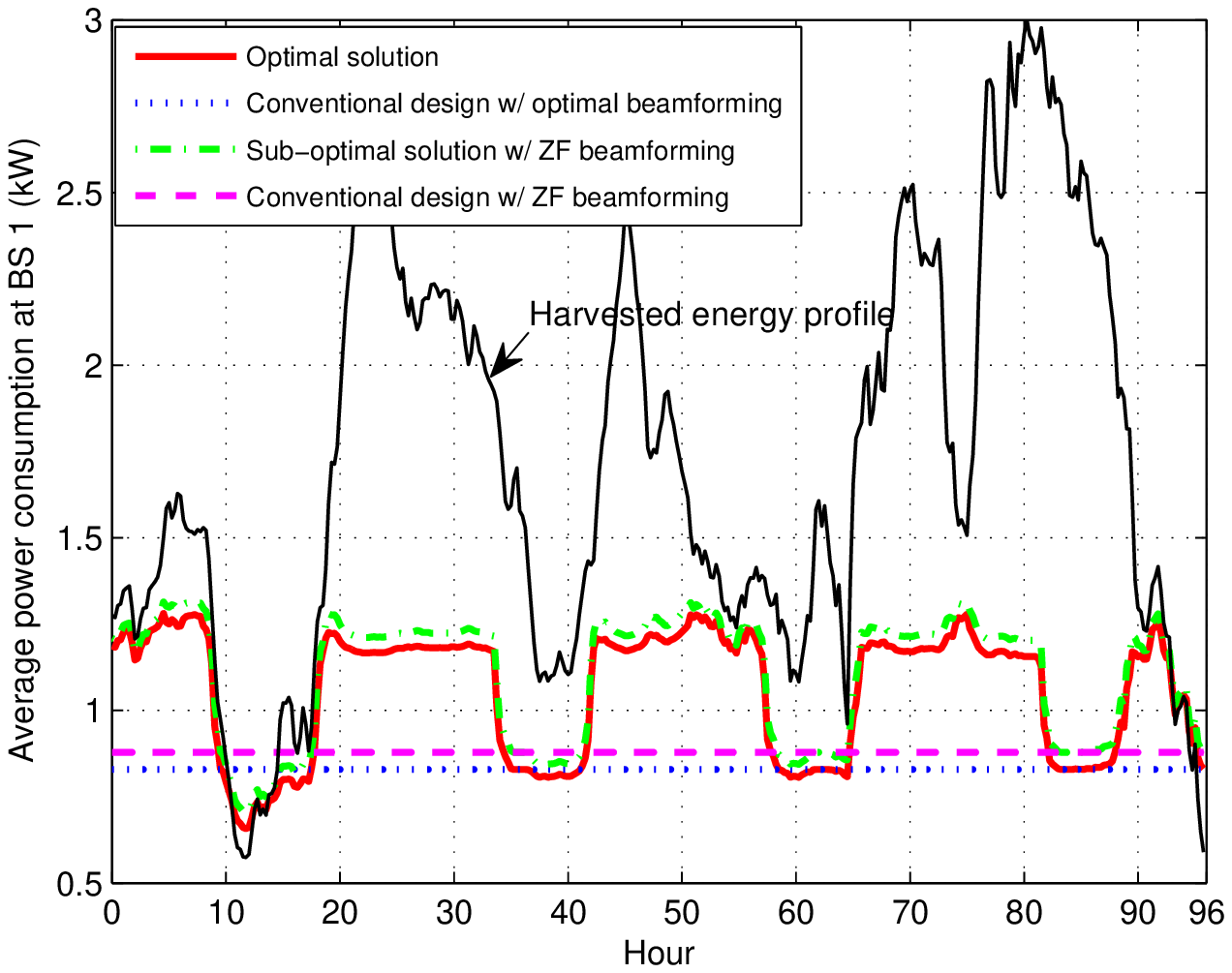}\vspace{-0em}
\end{minipage}
}\hspace{0ex}
\subfigure[BS 2]{
\begin{minipage}[t]{0.31\linewidth}
\centering
\includegraphics[width=6.3cm]{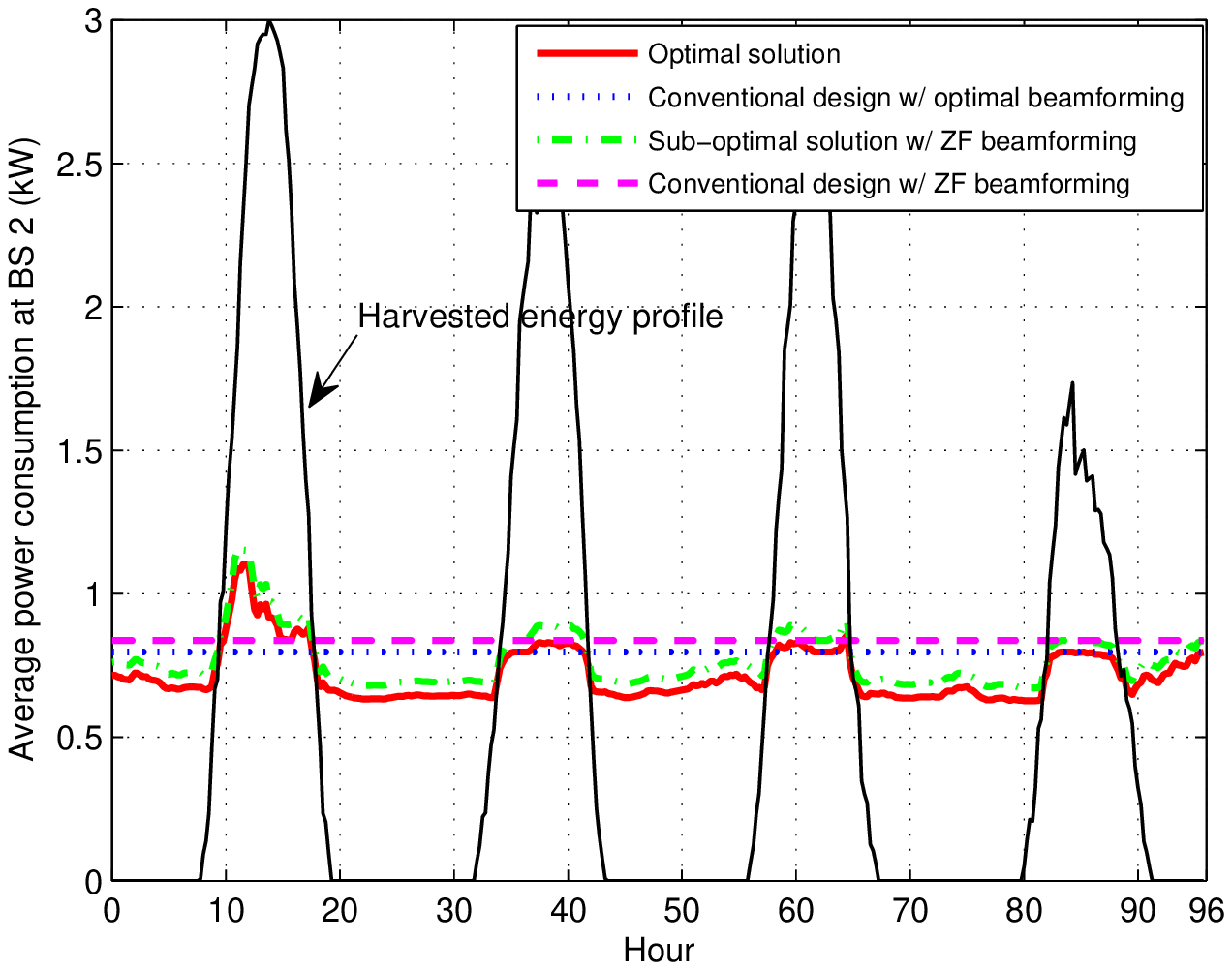}\vspace{-0em}
\end{minipage}
}\hspace{0ex}
\subfigure[BS 3]{
\begin{minipage}[t]{0.31\linewidth}
\centering
\includegraphics[width=6.3cm]{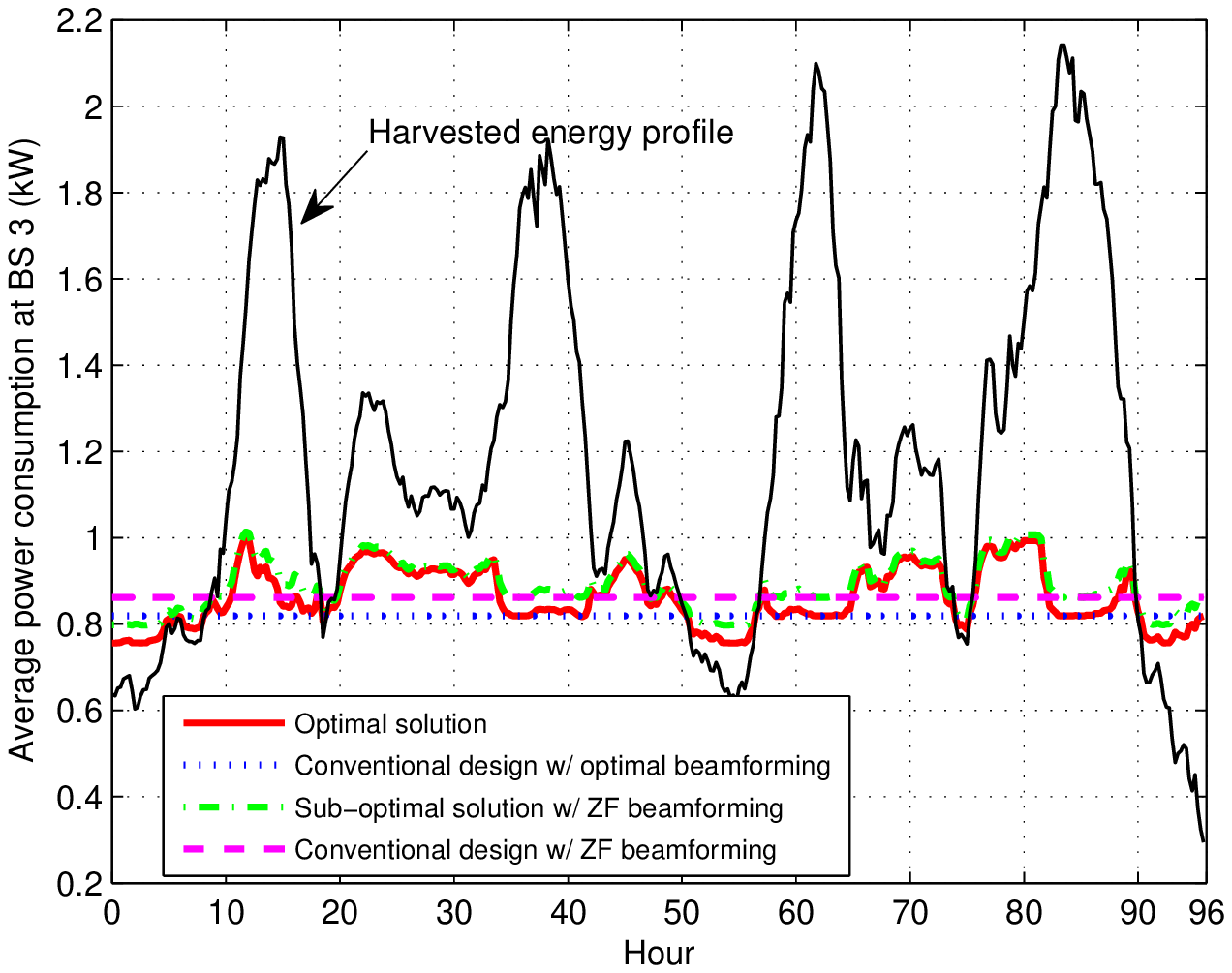}\vspace{-0em}
\end{minipage}
}\hspace{0ex}
\caption{The average power consumption at different BSs over time.} \label{fig:4}\vspace{-0em}
\end{figure*}

We consider a practical three-BS cluster (with $N=3$), where the cells are hexagonal with the inter-BS distance of one kilometer, the number of transmit antennas at each BS is $M=4$, and the total number of MTs is $K=8$. We assume that BS 1 and BS 2 are deployed with solar and wind generators, respectively, while BS 3 has both of them deployed. Then based on a real-world solar and wind energy production data,\footnote{See \url{http://www.elia.be/en/grid-data/power-generation/}.} we model the energy harvesting rates at the three BSs as shown in Fig. \ref{fig:4}, where the harvested energy at each BS has been averaged over 15 minutes, and thus there are 384 energy harvesting rate samples for each BS over 96 hours (i.e., four days). For each renewable energy sample, we apply the same set of 100 randomly generated user channels in order to focus our study on the impact of renewable generation variation. For each channel realization, we randomly generate the $8$ MTs in the three cells following a uniform distribution, and model each channel by a superposition of path loss and short-term Rayleigh fading. We assume that the background noise and the QoS requirement at each MT receiver are $\sigma_{k}^2 = -85$ dBm and $\gamma_k = 10$ dB, $\forall k\in\mathcal{K}$, respectively, while the PA efficiency, the maximum transmit power and the non-transmission constant power at each BS are $\eta = 0.1$ and $P_{{\rm max},i} = 100$ Watt (W), and $P_{{\rm c},i} = 500$ W, $\forall i \in \mathcal{N}$, respectively. We consider the prices for buying (selling) energy from (to) the grid as $\alpha_{{\rm b}, i} = 1$/kW ($\alpha_{{\rm s},i} = 0.1/$kW), $\forall i\in\mathcal{N}$, where the price unit is normalized without loss of generality. Note that among the randomly generated channels, we have only chosen the realizations such that problems (P1) and (P1-ZF) are both feasible for the ease of performance comparison.{\footnote{In particular, among the total 100 random channel realizations, problem (P1) is feasible for 95 realizations of them, while (P1-ZF) is feasible for 92 realizations. In other words, there are 3 channel realizations, for which (P1) is feasible but (P1-ZF) is not.}}

Fig. \ref{fig:4} shows the average power consumption at each of the three BSs over time, together with its harvested energy profile. For the two conventional designs with optimal and ZF beamforming, it is observed that the average power consumption at each BS remains constant over time, regardless of the fluctuations in energy harvesting rates at each BS. This is so because the two conventional designs solve the sum-power minimization problems (\ref{eqn:OPTbeamforming}) and (\ref{eqn:ZFbeamforming}) to obtain the respective transmit beamforming solutions by ignoring the two-way energy trading prices and the energy harvesting rates at all the BSs; as a result, the average power consumption at each BS is constant for this purposefully designed setup with fixed number of users and the same set of wireless channels over the time. In contrast, for the proposed optimal and sub-optimal solutions, the resulting average power consumption at each BS is observed to vary following a similar pattern as the corresponding energy harvesting rates. For example, during hours 0-9, BS 1 with large locally generated wind energy increases its transmission power and accordingly decreases the excess energy sold to the grid, while BS 2 and BS 3 with zero/smaller locally generated solar energy reduce their transmission power that need to be purchased from the grid, in order to minimize the total energy cost of the three BSs, given that $\alpha_{{\rm s}, i} < \alpha_{{\rm b},i}, \forall i\in\mathcal N$. Finally, it is observed that the optimal solution leads to small power consumption than the sub-optimal solution with ZF beamforming for each of the three BSs over the 96 hours. The energy (thus cost) reduction, however, is achieved at a cost of higher algorithm implementation complexity.

\begin{figure}
\centering
 \epsfxsize=1\linewidth
    \includegraphics[width=8cm]{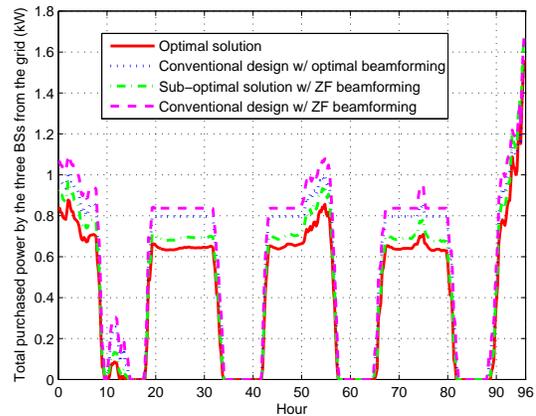}\vspace{-0em}
\caption{The total purchased power by the three BSs from the grid over time.} \label{fig:6}\vspace{-0em}
\end{figure}

\begin{figure}
\centering
 \epsfxsize=1\linewidth
    \includegraphics[width=8cm]{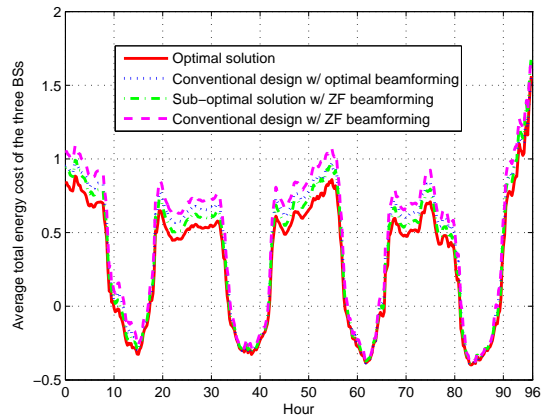}\vspace{-0em}
\caption{The total energy cost of the three BSs over time.} \label{fig:5}\vspace{-0em}
\end{figure}

Figs. \ref{fig:6} and \ref{fig:5} show the total purchased energy from the grid (i.e., $\sum_{i\in\mathcal N}G_{{\rm b},i}$) and the total energy cost of the three BSs over time, respectively, based on the results in Fig. \ref{fig:4}. It is observed that the proposed optimal and sub-optimal solutions reduce the expensive grid energy purchase (see Fig. \ref{fig:6}) and accordingly the total energy cost (see Fig. \ref{fig:5}) at all time, as compared to the conventional designs with optimal and ZF beamforming, respectively. The grid energy purchase and the energy cost reductions are also observed to be more substantial when the energy harvesting rates at different BSs are more unevenly distributed, e.g., during hours 0-9 and 20-30. On average, the average total energy costs over the entire period of 96 hours achieved by the four schemes of optimal solution, conventional design with optimal beamforming, sub-optimal solution with ZF beamforming, and conventional design with ZF beamforming are obtained as 0.3110, 0.3993, 0.3729, and 0.4601, respectively. As a result, the optimal solution and the sub-optimal solution with ZF beamforming achieve 22.12\% and 6.61\% total energy cost reductions over the conventional design with optimal beamforming, and 32.41\% and 18.96\% cost reductions over the conventional design with ZF beamforming, respectively.

\section{Conclusion}\label{sec:conclusion}

In this paper, we proposed a new cooperative energy trading approach for the downlink CoMP transmission powered by smart grids to reduce the energy cost of cellular systems. We minimize the total energy cost at the BSs in one CoMP cluster subject to the QoS constraints of MTs, by jointly optimizing the BSs' two-way energy trading with the grid in the supply side and their cooperative transmit beamforming in the demand side. We propose an optimal algorithm for this problem via applying techniques from convex optimization and uplink-downlink duality, as well as a sub-optimal solution with lower complexity based on ZF precoding. We show that interestingly, the conventional approach of minimizing the sum-power consumption of BSs is no more optimal when two-way energy trading with the grid is considered, while a joint energy trading and communication cooperation optimization is necessary to reschedule the power consumptions among BSs to follow the renewable energy patterns, so as to achieve the minimum energy cost for the system. It is our hope that the proposed cooperative energy trading and communication cooperation approach can open a new avenue for further research on the design of communication systems  powered by smart grids.

\appendix
\subsection{Proof of Proposition \ref{proposition:strong:duality}}\label{appendix:1}

Since (P1) can be recast as the equivalent form of (P1-Ref), it follows that $v_{\rm (P1)} = v_{\rm (P1-Ref)}$ with $v_{\rm (P1)}$ and $v_{\rm (P1-Ref)}$ denoting the optimal values of (P1) and (P1-Ref), respectively. Let $v_{\rm (D1)}$ denote the optimal value of (D1). We then prove Proposition \ref{proposition:strong:duality} by showing that $v_{\rm (D1)} = v_{\rm (P1-Ref)}$.

Let the dual variables associated with the power constraints in (\ref{eqn:8}) of (P1-Ref) be denoted by $\mu_i\ge 0, i\in\mathcal{N}$, and those corresponding to the individual maximum transmit power constraints in (\ref{eqn:MaxPower}) be denoted by $\nu_i\ge 0, i\in\mathcal{N}$. Then the partial Lagrangian of problem (P1-Ref) can also be expressed as $\mathcal{L}\left(\{\mv{w}_k\},\{G_{{\rm{b}},i}\},\{G_{{\rm{s}},i}\},\{\mu_i\},\{\nu_i\}\right)$ in (\ref{eqn:Lagangian}), and accordingly its dual function is given by
\begin{align*}
&g_{\rm Ref}(\{\mu_i\},\{\nu_i\}) = \nonumber\\
&\mathop{\mathtt{min}}\limits_{\{\mv{w}_k\},\{G_{{\rm{b}},i}\ge 0\},\{G_{{\rm{s}},i}\ge 0\}}\mathcal{L}\left(\{\mv{w}_k\},\{G_{{\rm{b}},i}\},\{G_{{\rm{s}},i}\},\{\mu_i\},\{\nu_i\}\right)\nonumber\\
&~~~~~~\mathtt{s.t.}~~~\left\|\mv{h}_k^H\mv{W} ~\sigma_k\right\|\le \sqrt{1+\frac{1}{\gamma_k}}\mv{h}_k^H\mv{w}_k, \forall k\in\mathcal{K}.
\end{align*}
The dual problem of (P1-Ref) is thus expressed as
\begin{align*}
\mathrm{(D1-Ref):}\mathop{\mathtt{max}}\limits_{\{\mu_i\ge 0,\nu_i\ge 0\}}g_{\rm Ref}(\{\mu_i\},\{\nu_i\}).
\end{align*}
Since (P1-Ref) is convex and satisfies the Slater's condition \cite{BoydVandenberghe2004}, we have $v_{\rm (P1-Ref)}=v_{\rm (D1-Ref)}$ with $v_{\rm (D1-Ref)}$ denoting the optimal value of (D1-Ref). Meanwhile, it is evident that $g_{\rm Ref}(\{\mu_i\},\{\nu_i\}) = g(\{\mu_i\},\{\nu_i\})$ holds for any given $\{\mu_i\}$ and $\{\nu_i\}$ due to the equivalent relationship between (\ref{eqn:7:final}) and (\ref{eqn:7:final:convex}). As a result, we can have $v_{\rm (D1)}=v_{\rm (D1-Ref)}=v_{\rm (P1-Ref)}$. Combining this argument with $v_{\rm (P1)} = v_{\rm (P1-Ref)}$, it follows that $v_{\rm (D1)}=v_{\rm (P1)}$. Therefore, Proposition \ref{proposition:strong:duality} is proved.

\subsection{Proof of Lemma \ref{lemma:1}}\label{appendix:2}
First, we assume that $\mu_i  < \alpha_{{\rm s},i}$ for any $i\in\mathcal{N}$. In this case, it follows that $\mathcal{L}\left(\{\mv{w}_k\},\{G_{{\rm{b}},i}\},\{G_{{\rm{s}},i}\},\{\mu_i\},\{\nu_i\}\right) \to -\infty$ if $G_{{\rm{s}},i} \to \infty$, and therefore $g(\{\mu_i\},\{\nu_i\}) $ is unbounded from below. As a result, $\mu_i  < \alpha_{{\rm s},i}$ cannot be true in order for $g(\{\mu_i\},\{\nu_i\}) $ to be bounded from below. It thus follows that $\mu_i \ge \alpha_{{\rm s},i}, \forall i\in\mathcal{N}$.

Next, we can similarly show that if $\mu_i > \alpha_{{\rm b},i}$ for any $i\in\mathcal{N}$, then $\mathcal{L}\left(\{\mv{w}_k\},\{G_{{\rm{b}},i}\},\{G_{{\rm{s}},i}\},\{\mu_i\},\{\nu_i\}\right) \to -\infty$ holds by setting $G_{{\rm{b}},i} \to \infty$, i.e., $g(\{\mu_i\},\{\nu_i\}) $ is unbounded from below. As a result, we have $\mu_i \le \alpha_{{\rm b},i}, \forall i\in\mathcal{N}$ in order for $g(\{\mu_i\},\{\nu_i\}) $ to be bounded from below.

By combining the two cases, Lemma \ref{lemma:1} follows immediately.

\subsection{Proof of Lemma \ref{lemma:3}}\label{appendix:3}

Without loss of generality, we can express $\mv{w}_{k}$ as
\begin{align}\label{eqn:zf:general}
\mv{w}_{k} = \left[{\mv{V}}_{k},\tilde{\mv{V}}_{k}\right] \left[\bar{\mv{w}}_{k}^T,\tilde{\mv{w}}_{k}^T\right]^T, \forall k\in\mathcal{K},
\end{align}
where $\bar{\mv{w}}_k \in \mathbb{C}^{(K-1)\times1}$.
By using (\ref{eqn:zf:general}) together with the fact that the ZF constraints in (\ref{eqn:zf}) are equivalent to $\mv{H}_{-k} \mv{w}_k = \mv{0}, \forall k\in\mathcal{K}$, we have
\begin{align*}
&\mv{H}_{-k}\mv{w}_{k} = \mv{H}_{-k}\left[{\mv{V}}_{k},\tilde{\mv{V}}_{k}\right] \left[\bar{\mv{w}}_{k}^T,\tilde{\mv{w}}_{k}^T\right]^T \\= & \mv{U}_{k}\mv{\Sigma}_{k}\mv{V}_{k}^H\left[{\mv{V}}_{k},\tilde{\mv{V}}_{k}\right] \left[\bar{\mv{w}}_{k}^T,\tilde{\mv{w}}_{k}^T\right]^T \\ = &\mv{U}_{k}\mv{\Sigma}_{k}\left[{\mv{I}},\tilde{\mv{0}}\right] \left[\bar{\mv{w}}_{k}^T,\tilde{\mv{w}}_{k}^T\right]^T = \mv{U}_{k}\mv{\Sigma}_{k}\bar{\mv{w}}_{k} = \mv{0}, \forall k\in\mathcal{K}.
\end{align*}
Since $\mv{U}_{k}\in \mathbb{C}^{(K-1)\times(K-1)}$ and $\mv{\Sigma}_{k}\in \mathbb{C}^{(K-1)\times(K-1)}$ are both of full rank provided that the channel vector $\mv{h}_k$'s are linearly independent, it follows that $\mv{U}_{k}\mv{\Sigma}_{k}$ is also of full rank. As a result, it must hold that $\bar{\mv{w}}_{k} = \mv{0}$. Together with (\ref{eqn:zf:general}), we then have $\mv{w}_{k} = \tilde{\mv{V}}_{k} \tilde{\mv{w}}_{k}, \forall k\in\mathcal{K}$, which completes the proof of Lemma \ref{lemma:3}.

\subsection{Proof of Proposition \ref{proposition:P5}}\label{appendix:4_1}

For problem (\ref{eqn:ZF:Reformulate}), let $\mu_i \ge 0$ and $\nu_i\ge 0$ denote the dual variables associated with $i$th power constraint in (\ref{eqn:8:ZF}) and $i$th individual maximum transmit power constraint in (\ref{eqn:MaxPower:ZF}), respectively, $i\in\mathcal{N}$. Then we have the Lagrangian of  (\ref{eqn:ZF:Reformulate}) as
\begin{align}
&\mathcal{L}_{\rm{ZF}}\left(\{\tilde{\mv{w}}_k\},\{G_{{\rm{b}},i}\},\{G_{{\rm{s}},i}\},\{\mu_i\},\{\nu_i\}\right)\nonumber\\%=&\sum_{i\in\mathcal{N}}\left( \alpha_{{\rm b},i}G_{{\rm{b}},i} - \alpha_{{\rm s},i}G_{{\rm{s}},i}\right)+ \sum_{i\in\mathcal{N}}\mu_i\left(\sum_{k\in\mathcal{K}}\tilde{\mv{w}}_{k}^H\tilde{\mv{V}}_{k} ^H\mv{B}_i\tilde{\mv{V}}_{k} \tilde{\mv{w}}_{k}+ P_{{\rm c},i} - E_i - G_{{\rm{b}},i} + G_{{\rm{s}},i}\right)\nonumber\\
= & \sum_{k\in\mathcal{K}}\tilde{\mv{w}}_{k}^H\tilde{\mv{V}}_{k} ^H\mv{B}_{\mu,\nu}\tilde{\mv{V}}_{k}\tilde{\mv{w}}_k + \sum_{i\in\mathcal{N}}
(\alpha_{{\rm b},i} - \mu_i)G_{{\rm{b}},i} \nonumber\\&+ \sum_{i\in\mathcal{N}} (\mu_i - \alpha_{{\rm s},i})G_{{\rm{s}},i}+ \sum_{i\in\mathcal{N}}(P_{{\rm c},i}-E_i)\mu_i -\sum_{i\in\mathcal{N}}P_{{\rm max},i}\nu_i,
\label{eqn:Lagangian:zf}
\end{align}
where $\mv{B}_{\mu,\nu} = \sum_{i\in\mathcal{N}}(\mu_i+\nu_i)\mv{B}_i$.
Accordingly, the dual function can then be given by
\begin{align}
g_{\rm{ZF}}(\{\mu_i\},\{\nu_i\}) = &\nonumber\\ \mathop{\mathtt{min}}\limits_{\{\mv{w}_k\},\{G_{{\rm{b}},i}\ge 0\},\{G_{{\rm{s}},i}\ge 0\}}&\mathcal{L}_{\rm{ZF}}\left(\{\tilde{\mv{w}}_k\},\{G_{{\rm{b}},i}\},\{G_{{\rm{s}},i}\},\{\mu_i\},\{\nu_i\}\right).\nonumber\\
\mathtt{s.t.}~~~~~~&\mv{h}_k^H \tilde{\mv{V}}_{k} \tilde{\mv{w}}_{k}\ge \sigma_{k}\sqrt{\gamma_k}, \forall k\in\mathcal{K}.\label{eqn:dual_function:zf}
\end{align}
The dual problem of  (\ref{eqn:ZF:Reformulate}) is thus expressed as
\begin{align}
\mathop{\mathtt{max}}\limits_{\{\mu_i\ge 0,\nu_i\ge 0\}}g_{\rm{ZF}}(\{\mu_i\},\{\nu_i\}).\label{eqn:dual_problem:zf}
\end{align}
Since problem (\ref{eqn:ZF:Reformulate}) is convex and satisfies the Slater's conditions \cite{BoydVandenberghe2004}, strong duality holds between (\ref{eqn:ZF:Reformulate}) and (\ref{eqn:dual_problem:zf}). Accordingly, we can solve problem (\ref{eqn:ZF:Reformulate}) by equivalently solving  (\ref{eqn:dual_problem:zf}). In the following, we first solve the problem in (\ref{eqn:dual_function:zf}) to obtain $g_{\rm{ZF}}(\{\mu_i\},\{\nu_i\})$ under given $\{\mu_i\}$ and $\{\nu_i\}$, and then maximize $g_{\rm{ZF}}(\{\mu_i\},\{\nu_i\})$ over $\{\mu_i\}$ and $\{\nu_i\}$.

First, we have the following lemma for the dual function $g_{\rm{ZF}}(\{\mu_i\},\{\nu_i\})$.
\begin{lemma}\label{lemma:2}
In order for $g_{\rm{ZF}}(\{\mu_i\},\{\nu_i\})$ to be bounded from below, it must hold that
\begin{align}
\alpha_{{\rm s},i} \le \mu_i \le \alpha_{{\rm b},i}, \forall i\in\mathcal{N}.
\end{align}
\end{lemma}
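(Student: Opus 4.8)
The plan is to mimic the proof of Lemma \ref{lemma:1} almost verbatim, since the dependence of the Lagrangian $\mathcal{L}_{\rm{ZF}}$ in (\ref{eqn:Lagangian:zf}) on the trading variables $\{G_{{\rm{b}},i}\}$ and $\{G_{{\rm{s}},i}\}$ is literally the same as that of $\mathcal{L}$ in (\ref{eqn:Lagangian}). The key observation I would isolate first is that the only terms of $\mathcal{L}_{\rm{ZF}}$ containing $G_{{\rm{b}},i}$ and $G_{{\rm{s}},i}$ are the linear terms $(\alpha_{{\rm b},i} - \mu_i)G_{{\rm{b}},i}$ and $(\mu_i - \alpha_{{\rm s},i})G_{{\rm{s}},i}$; the beamforming term $\sum_{k\in\mathcal{K}}\tilde{\mv{w}}_{k}^H\tilde{\mv{V}}_{k}^H\mv{B}_{\mu,\nu}\tilde{\mv{V}}_{k}\tilde{\mv{w}}_k$ and the constant $\sum_{i\in\mathcal{N}}(P_{{\rm c},i}-E_i)\mu_i - \sum_{i\in\mathcal{N}}P_{{\rm max},i}\nu_i$ do not involve the trading variables at all. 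Hence boundedness of $g_{\rm{ZF}}$ from below is controlled entirely by the signs of these linear coefficients.

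Next I would argue the lower bound $\mu_i \ge \alpha_{{\rm s},i}$ by contradiction: if $\mu_i < \alpha_{{\rm s},i}$ for some $i\in\mathcal{N}$, then the coefficient $(\mu_i - \alpha_{{\rm s},i})$ is strictly negative, and since $G_{{\rm{s}},i}$ ranges over $[0,\infty)$ in the minimization defining $g_{\rm{ZF}}$, driving $G_{{\rm{s}},i}\to\infty$ while holding all other variables fixed at any point feasible for (\ref{eqn:dual_function:zf}) sends $\mathcal{L}_{\rm{ZF}}\to -\infty$, so that $g_{\rm{ZF}}(\{\mu_i\},\{\nu_i\}) = -\infty$. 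By the symmetric argument, if $\mu_i > \alpha_{{\rm b},i}$ then $(\alpha_{{\rm b},i} - \mu_i)$ is strictly negative and letting $G_{{\rm{b}},i}\to\infty$ again forces $\mathcal{L}_{\rm{ZF}}\to -\infty$. Combining the two cases yields $\alpha_{{\rm s},i}\le\mu_i\le\alpha_{{\rm b},i}$ for all $i\in\mathcal{N}$.

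There is essentially no hard step: the divergence is produced solely by the unbounded trading variables, and the presence of the new quadratic beamforming term plays no role because it is independent of $\{G_{{\rm{b}},i}\}$ and $\{G_{{\rm{s}},i}\}$. The only subtlety worth flagging is that, to send a trading variable to infinity \emph{while staying feasible}, I should first exhibit one $\{\tilde{\mv{w}}_k\}$ satisfying the SNR constraints in (\ref{eqn:dual_function:zf}); this is guaranteed by the standing assumption that (P1-ZF) (equivalently (\ref{eqn:p1ZF:feasibility})) is feasible, so the infimum is genuinely over a nonempty set. With that remark in place the statement follows immediately.
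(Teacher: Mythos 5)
Your proof is correct and takes essentially the same route as the paper: the paper proves Lemma \ref{lemma:2} simply by noting it is analogous to Lemma \ref{lemma:1}, whose proof is exactly your argument of driving $G_{{\rm s},i}\to\infty$ when $\mu_i<\alpha_{{\rm s},i}$ (resp.\ $G_{{\rm b},i}\to\infty$ when $\mu_i>\alpha_{{\rm b},i}$) to make the Lagrangian diverge to $-\infty$. Your additional remark that the SNR-feasible set in (\ref{eqn:dual_function:zf}) must be nonempty is a harmless refinement that the paper leaves implicit.
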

\begin{proof}
The proof is similar to that of Lemma \ref{lemma:1}, and thus is omitted here for brevity.
\end{proof}
According to Lemma \ref{lemma:2}, we only need to derive $g_{\rm ZF}(\{\mu_i\},\{\nu_i\})$ for given $\{\mu_i\}$ and $\{\nu_i\}$ satisfying $\alpha_{{\rm s},i} \le \mu_i \le \alpha_{{\rm b},i}$ and $\nu_i\ge 0$, $\forall i\in\mathcal{N}$. As a consequence, it follows that $\mv{B}_{\mu,\nu} \succ \mv{0}$. In this case, we can obtain the optimal solution to the problem in (\ref{eqn:dual_function:zf}) in the following lemma.
\begin{lemma}\label{lemma:4.3}
The optimal solution to the problem in (\ref{eqn:dual_function:zf}) is given by $G_{{\rm{b}},i}^{{\star\star}} = G_{{\rm{s}},i}^{{\star\star}} = 0, \forall i\in\mathcal{N}$ and
\begin{align}
\tilde{\mv{w}}_k^{\star\star} = & \frac{\sigma_{k}\sqrt{\gamma_k}}{\left|\mv{h}_k^H\tilde{\mv{V}}_{k} \left(\tilde{\mv{V}}_{k} ^H\mv{B}_{\mu,\nu}\tilde{\mv{V}}_{k}\right)^{-1}\tilde{\mv{V}}_{k}^H\mv{h}_k\right|} \nonumber\\& \cdot   \left(\tilde{\mv{V}}_{k} ^H\mv{B}_{\mu,\nu}\tilde{\mv{V}}_{k}\right)^{-1}\tilde{\mv{V}}_{k}^H\mv{h}_k, \forall k\in\mathcal{K}.\label{eqn:w_k}
\end{align}
\end{lemma}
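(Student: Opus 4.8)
The plan is to exploit the fact that the minimization defining $g_{\rm ZF}(\{\mu_i\},\{\nu_i\})$ in (\ref{eqn:dual_function:zf}) decouples across the energy-trading variables $\{G_{{\rm b},i},G_{{\rm s},i}\}$ and the beamforming variables $\{\tilde{\mv{w}}_k\}$, and then to solve a separable per-user quadratic program in closed form. First, I would invoke Lemma \ref{lemma:2}, which guarantees $\alpha_{{\rm s},i}\le\mu_i\le\alpha_{{\rm b},i}$ whenever $g_{\rm ZF}$ is finite. In the Lagrangian (\ref{eqn:Lagangian:zf}), the only terms carrying $G_{{\rm b},i}$ and $G_{{\rm s},i}$ are $(\alpha_{{\rm b},i}-\mu_i)G_{{\rm b},i}$ and $(\mu_i-\alpha_{{\rm s},i})G_{{\rm s},i}$, whose coefficients are both nonnegative by Lemma \ref{lemma:2}. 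Minimizing each over its nonnegative orthant therefore forces $G_{{\rm b},i}^{\star\star}=G_{{\rm s},i}^{\star\star}=0$ for all $i\in\mathcal{N}$, which is the first claim.

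With the trading terms removed, the residual problem is $\min_{\{\tilde{\mv{w}}_k\}}\sum_{k\in\mathcal{K}}\tilde{\mv{w}}_k^H\tilde{\mv{V}}_k^H\mv{B}_{\mu,\nu}\tilde{\mv{V}}_k\tilde{\mv{w}}_k$ subject to $\mv{h}_k^H\tilde{\mv{V}}_k\tilde{\mv{w}}_k\ge\sigma_{k}\sqrt{\gamma_k}$, $\forall k\in\mathcal{K}$. Since both the objective (a sum over $k$) and the constraint set (one constraint per $k$) are separable, this splits into $K$ independent subproblems. Abbreviating $\mv{A}_k\triangleq\tilde{\mv{V}}_k^H\mv{B}_{\mu,\nu}\tilde{\mv{V}}_k$ and $\mv{g}_k\triangleq\tilde{\mv{V}}_k^H\mv{h}_k$, each subproblem minimizes $\tilde{\mv{w}}_k^H\mv{A}_k\tilde{\mv{w}}_k$ subject to $\mv{g}_k^H\tilde{\mv{w}}_k\ge\sigma_{k}\sqrt{\gamma_k}$, where the right-hand side is real because the phase of $\mv{w}_k$ was fixed so that $\mv{h}_k^H\tilde{\mv{V}}_k\tilde{\mv{w}}_k\ge 0$. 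As $\mv{B}_{\mu,\nu}\succ\mv{0}$ and $\tilde{\mv{V}}_k$ has full column rank, $\mv{A}_k\succ\mv{0}$, so each subproblem is a strictly convex quadratic program with a single linear inequality and hence a unique minimizer.

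Third, I would note that the constraint must be active at the optimum, since if $\mv{g}_k^H\tilde{\mv{w}}_k>\sigma_{k}\sqrt{\gamma_k}$ one could scale $\tilde{\mv{w}}_k$ down toward the boundary and strictly decrease the positive-definite objective. Applying the KKT stationarity condition $2\mv{A}_k\tilde{\mv{w}}_k=\lambda\mv{g}_k$ gives $\tilde{\mv{w}}_k=\frac{\lambda}{2}\mv{A}_k^{-1}\mv{g}_k$; substituting into the active constraint $\mv{g}_k^H\tilde{\mv{w}}_k=\sigma_{k}\sqrt{\gamma_k}$ determines $\frac{\lambda}{2}=\sigma_{k}\sqrt{\gamma_k}/(\mv{g}_k^H\mv{A}_k^{-1}\mv{g}_k)$. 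Back-substituting and re-expanding $\mv{A}_k^{-1}\mv{g}_k=(\tilde{\mv{V}}_k^H\mv{B}_{\mu,\nu}\tilde{\mv{V}}_k)^{-1}\tilde{\mv{V}}_k^H\mv{h}_k$ reproduces (\ref{eqn:w_k}) exactly, once one observes that the scalar $\mv{g}_k^H\mv{A}_k^{-1}\mv{g}_k=\mv{h}_k^H\tilde{\mv{V}}_k(\tilde{\mv{V}}_k^H\mv{B}_{\mu,\nu}\tilde{\mv{V}}_k)^{-1}\tilde{\mv{V}}_k^H\mv{h}_k$ is real and positive, which is why it appears under an absolute value in the stated formula. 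The only points requiring care --- hardly genuine obstacles --- are justifying the activeness of the constraint and the strict positive-definiteness of $\mv{A}_k$ that makes the stationary point the unique global optimum; the remaining manipulation is routine linear algebra.
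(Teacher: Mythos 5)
Your proposal is correct and follows essentially the same route as the paper: both drop the constant term, zero out the trading variables $\{G_{{\rm b},i},G_{{\rm s},i}\}$ using the sign conditions from Lemma \ref{lemma:2}, and decompose the beamforming part into $K$ independent, strictly convex per-user subproblems solved in closed form. The only difference is mechanical: the paper solves each per-user subproblem via the whitening change of variables $\tilde{\mv{w}}_k=\left(\tilde{\mv{V}}_k^H\mv{B}_{\mu,\nu}\tilde{\mv{V}}_k\right)^{-1/2}\tilde{\mv{w}}'_k$, reducing it to a minimum-norm problem, whereas you apply the KKT conditions (active constraint plus stationarity) directly --- both are routine and yield (\ref{eqn:w_k}).
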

\begin{proof}
See Appendix \ref{appendix:4}.
\end{proof}

From Lemma \ref{lemma:4.3}, we have accordingly obtained $g_{\rm{ZF}}(\{\mu_i\},\{\nu_i\})$. Next, we minimize $g_{\rm{ZF}}(\{\mu_i\},\{\nu_i\})$ over $\{\mu_i\}$ and $\{\nu_i\}$. Similar to Section \ref{sec:Minimize}, we employ the ellipsoid method \cite{Boyd:ConvexII} to derive the optimal dual variables $\{\mu_i^{**}\}$ and $\{\nu_i^{**}\}$ by using the fact that the subgradients of $g_{\rm{ZF}}(\{\mu_i\},\{\nu_i\})$ for given $\mu_i$ and $\nu_i$ are given by $\sum_{k\in\mathcal{K}}\tilde{\mv{w}}_{k}^{{\star\star} H}\tilde{\mv{V}}_{k} ^H\mv{B}_i\tilde{\mv{V}}_{k} \tilde{\mv{w}}_{k}^{{\star\star}}+ P_{{\rm c},i} - E_i$ and $\sum_{k\in\mathcal{K}}\tilde{\mv{w}}_{k}^{{\star\star} H}\tilde{\mv{V}}_{k} ^H\mv{B}_i\tilde{\mv{V}}_{k} \tilde{\mv{w}}_{k}^{{\star\star}}-P_{{\rm max},i}$, respectively, $\forall i\in\mathcal{N}$.

With $\{\mu_i^{**}\}$ and $\{\nu_i^{**}\}$ obtained, the optimal transmit beamforming vector $\tilde{\mv{w}}_k^{**}$'s in (\ref{eqn:w_k_P5}) follow from Lemma \ref{lemma:4.3} by replacing $\{\mu_i\}$ and $\{\nu_i\}$ as $\{\mu_i^{**}\}$ and $\{\nu_i^{**}\}$. Accordingly, the optimal energy management strategy $\{G_{{\rm{b}},i}^{**}\}$ and $\{G_{{\rm{s}},i}^{**}\}$ in (\ref{eqn:G_n1:P5}) and in (\ref{eqn:G_n2:P5}) can be obtained based on the fact that the power constraints should be met with equality at the optimal solution of (\ref{eqn:ZF:Reformulate}). Therefore, Proposition \ref{proposition:P5} is proved.

For convenience, we summarize the algorithm to solve problem (\ref{eqn:ZF:Reformulate}) in Table \ref{table3} as Algorithm 2.

\begin{table}[t]
\begin{center}
\caption{{\bf Algorithm 2}: Algorithm for Solving Problem (\ref{eqn:ZF:Reformulate})} \vspace{0.01cm}
\hrule \vspace{0.01cm}\label{table3}
\begin{enumerate}
\item Initialize $\{\mu_i\}$ and $\{\nu_i\}$ with $\alpha_{{\rm s},i} \le \mu_i \le \alpha_{{\rm b},i}$ and $\nu_i \ge 0, \forall i\in\mathcal{N}$.
\item {\bf Repeat:}
    \begin{enumerate}
    \item Obtain $\{\tilde{\mv{w}}_{k}^{\star\star}\}$ given by (\ref{eqn:w_k}).
    \item Compute the subgradients of $g(\{{\mu}_i\},\{{\nu}_i\})$ associated with $\mu_i$ and $\nu_i$ as $\sum_{k\in\mathcal{K}}\tilde{\mv{w}}_{k}^{\star\star H}\tilde{\mv{V}}_{k} ^H\mv{B}_i\tilde{\mv{V}}_{k} \tilde{\mv{w}}_{k}^{\star\star} +P_{{\rm c},i}  - E_i$ and $\sum_{k\in\mathcal{K}}\tilde{\mv{w}}_{k}^{\star\star H}\tilde{\mv{V}}_{k} ^H\mv{B}_i\tilde{\mv{V}}_{k} \tilde{\mv{w}}_{k}^{\star\star} - P_{{\rm max},i}$, respectively, $i\in\mathcal{N}$, then update $\{\mu_i\}$ and $\{\nu_i\}$ accordingly based on the ellipsoid method \cite{Boyd:ConvexII}, subject to $\alpha_{{\rm s},i} \le \mu_i \le \alpha_{{\rm b},i},$ and $\nu_i \ge 0, \forall i\in\mathcal{N}.$
    \end{enumerate}
\item {\bf Until} $\{\mu_i\}$ and $\{\nu_i\}$ all converge within a prescribed accuracy.
\item Compute $\{\tilde{\mv{w}}_k^{**}\}$,  $\{G_{{\rm{b}},i}^{**}\}$ and $\{G_{{\rm{s}},i}^{**}\}$ given by (\ref{eqn:w_k_P5}),  (\ref{eqn:G_n1:P5}) and (\ref{eqn:G_n2:P5}), respectively.
\end{enumerate}
\vspace{0.01cm} \hrule \vspace{0.01cm}\label{algorithm:3}
\end{center}
\vspace{-0cm}
\end{table}

\subsection{Proof of Lemma \ref{lemma:4.3}}\label{appendix:4}

It is evident that the problem in (\ref{eqn:dual_function:zf}) can be decomposed into the following $K$ subproblems (each for one MT $k\in\mathcal{K}$) as well as subproblems (\ref{eqn:dual_function2}) and (\ref{eqn:dual_function3}) by dropping the irrelative term $\sum_{i\in\mathcal{N}}(P_{{\rm c},i}-E_i)\mu_i-\sum_{i\in\mathcal{N}}P_{{\rm max},i}\nu_i$.
\begin{align}
\mathop{\mathtt{min}}\limits_{\tilde{\mv{w}}_k}~&\tilde{\mv{w}}_{k}^H\tilde{\mv{V}}_{k} ^H\mv{B}_{\mu,\nu}\tilde{\mv{V}}_{k}\tilde{\mv{w}}_k\nonumber\\
\mathtt{s.t.}~&\mv{h}_k^H \tilde{\mv{V}}_{k} \tilde{\mv{w}}_{k}\ge \sigma_{k}\sqrt{\gamma_k}, \forall k\in\mathcal{K}.\label{eqn:dual_function1:ZF}
\end{align}
Since it is easy to show that the optimal solutions to the subproblems (\ref{eqn:dual_function2}) and (\ref{eqn:dual_function3}) are given in $G_{{\rm{b}},i}^{{\star\star}} = G_{{\rm{s}},i}^{{\star\star}} = 0, \forall i\in\mathcal{N}$ (see  (\ref{eqn:30:solution})), we only need to verify that the optimal solution of (\ref{eqn:dual_function1:ZF}) is given in (\ref{eqn:w_k}) to complete the proof of this lemma.

From Lemma \ref{lemma:2} together with $0 < \alpha_{\min} \le \alpha_{{\rm{s}},i}$ in (\ref{eqn:price}), we have $\mv{B}_{\mu,\nu} \succ \mv{0}$. Thus, it follows that $\tilde{\mv{V}}_{k} ^H\mv{B}_{\mu,\nu}\tilde{\mv{V}}_{k}$ is of full rank. Without loss of generality, we can set
\begin{align}\label{eqn:w_tilde}
\tilde{\mv{w}}_k = \left(\tilde{\mv{V}}_{k} ^H\mv{B}_{\mu,\nu}\tilde{\mv{V}}_{k}\right)^{-1/2}\tilde{\mv{w}}'_k.
\end{align}
Accordingly, problem (\ref{eqn:dual_function1:ZF}) becomes
\begin{align*}
\mathop{\mathtt{min}}\limits_{\tilde{\mv{w}}'_k}~&\|\tilde{\mv{w}}'_k\|^2\\
\mathtt{s.t.}~&\mv{h}_k^H\tilde{\mv{V}}_{k}\left(\tilde{\mv{V}}_{k} ^H\mv{B}_{\mu,\nu}\tilde{\mv{V}}_{k}\right)^{-1/2}\tilde{\mv{w}}'_k\ge \sigma_{k}\sqrt{\gamma_k}, \forall k\in\mathcal{K},
\end{align*}
for which the optimal solution is given by
\begin{align}
\tilde{\mv{w}}'_k = & \frac{\sigma_{k}\sqrt{\gamma_k}}{\left|\mv{h}_k^H\tilde{\mv{V}}_{k}\left(\tilde{\mv{V}}_{k} ^H\mv{B}_{\mu,\nu}\tilde{\mv{V}}_{k}\right)^{-1}\tilde{\mv{V}}_{k}^H\mv{h}_k\right|}  \nonumber\\ &\cdot \left(\tilde{\mv{V}}_{k} ^H\mv{B}_{\mu,\nu}\tilde{\mv{V}}_{k}\right)^{-1/2}\tilde{\mv{V}}_{k}^H\mv{h}_k,k\in\mathcal{K}. \label{eqn:solution:ZF2}
\end{align}
By combining (\ref{eqn:w_tilde}) and (\ref{eqn:solution:ZF2}), it follows that the optimal solution to (\ref{eqn:dual_function1:ZF}) is given in (\ref{eqn:w_k}). Therefore, Lemma \ref{lemma:4.3} is proved.


\begin{thebibliography}{1}


\bibitem{Gesbert2010}
D. Gesbert, S. Hanly, H. Huang, S. Shamai, O. Simeone, and W. Yu, ``Multi-cell MIMO cooperative networks: a new look at interference,'' {\it IEEE J. Sel. Areas Commun.}, vol. 28, no. 9, pp. 1380-1408, Dec. 2010.

\bibitem{Yang2013How}
C. Yang, S. Han, X. Hou, and A. F. Molisch, ``How do we design CoMP to achieve its promised potential?'' {\it IEEE Wireless Commun.}, vol. 20, no. 1, pp. 67-74, Feb. 2013.

\bibitem{huawei}
Huawei, ``Mobile networks go green,'' [Online]. Available: \url{http://www.huawei.com/en/about-huawei/publications/communicate/hw-082734.htm}

\bibitem{HanAnsari2014}
T. Han and N. Ansari, ``Powering mobile networks with green energy,'' {\it IEEE Wireless Commun.}, vol. 21, no. 1, pp. 90-96, Feb. 2014.

\bibitem{XueSmartGird}
X. Fang, S. Misra, G. Xue, and D. Yang, ``Smart grid - the new and improved power grid: a survey,'' {\it IEEE Commun. Surveys \& Tutorials}, vol. 14, no. 4, pp. 944-980, 2012.


\bibitem{Chen2013}
S. Chen, N. B. Shroff, and P. Sinha, ``Energy trading in the smart grid: from end-user's perspective,'' in {\it Proc. Asilomar Conference on Signals, Systems \& Computers}, Nov. 2013.

\bibitem{WangSaadHan}
Y. Wang, W. Saad, Z. Han, H. V. Poor, and T. Ba\c sar, ``A game-theoretic approach to energy trading in the smart grid,'' {\it IEEE Trans. Smart Grid}, vol. 5, no. 3, pp. 1439-1450, May 2014.

\bibitem{Leithon2013}
J. Leithon, T. J. Lim, and S. Sun, ``Online energy management strategies for base stations powered by the smart grid,'' in {\it Proc. IEEE SmartGridComm}, pp. 199-204, Oct. 2013.

\bibitem{WieselEldarShamai2006}
A. Wiesel, Y. C. Eldar, and S. Shamai, ``Linear precoding via conic optimization for fixed MIMO receivers,'' {\it IEEE Trans. Sig. Process.}, vol. 54, no. 1, pp. 161-176, Jan. 2006.

\bibitem{Bengtsson1999}
M. Bengtsson and B. Ottersten, ``Optimal and suboptimal transmit beamforming,'' in {\it Handbook of Antennas in Wireless Communications}, L. C. Godara, Ed. CRC Press, 2002.

\bibitem{BoydVandenberghe2004} S. Boyd and L. Vandenberghe, {\it Convex Optimization}, Cambridge
University Press, 2004.

\bibitem{YuLan2007}
W. Yu and T. Lan, ``Transmitter optimization for the multi-antenna downlink with per-antenna power constraints,'' {\it IEEE Trans. Sig. Process.,} vol. 55, no. 6, pp. 2646-2660, Jun. 2007.
\bibitem{WieselZF}
A. Wiesel, Y. Eldar, and S. Shamai, ``Zero-forcing precoding and generalized inverses,'' {\it IEEE Trans. Sig. Process.,} vol. 55, no. 9, pp. 4409-4418, Sep. 2008.

\bibitem{ZhangRuiBD}
R. Zhang, ``Cooperative multi-cell block diagonalization with per-base-station power constraints,'' {\it IEEE J. Sel. Areas Commun.}, vol. 28, no. 9, pp. 1435-1445, Dec. 2010.

\bibitem{BuYu2012}
S. Bu, F. R. Yu, Y. Cai, and X. P. Liu, ``When the smart grid meets energy-efficient communications: green wireless cellular networks powered by the smart grid,'' {\it IEEE Trans. Wireless Commun.}, vol. 11, no. 8, pp. 3014-3024, Aug. 2012.

\bibitem{Chia2013}
Y. K. Chia, S. Sun, and R. Zhang, ``Energy cooperation in cellular networks with renewable powered base stations,'' {\it IEEE Trans. Wireless Commun.}, vol. 13, no. 12, pp. 6996-7010, Dec. 2014.

\bibitem{GuoTCOM}
Y. Guo, J. Xu, L. Duan, and R. Zhang, ``Joint energy and spectrum cooperation for cellular communication systems,'' {\it IEEE Trans. Commun.}, vol. 62, no. 10, pp. 3678-3691, Oct. 2014.

\bibitem{XuGuoZhang}
J. Xu and R. Zhang, ``CoMP meets smart grid: a new communication and energy cooperation paradigm,'' to appear in {\it IEEE Trans. Veh. Tech.}. [Online]. Available: \url{http://arxiv.org/abs/1303.2870}

\bibitem{XuDuanZhang}
J. Xu, L. Duan, and R. Zhang, ``Cost-aware green cellular networks with energy and communication cooperation,'' to appear in {\it IEEE Commun. Mag.}.  [Online]. Available: \url{http://arxiv.org/abs/1411.4139}

\bibitem{Boyd:ConvexII}
S. Boyd, ``Convex optimization II,'' Stanford University. [Online]. Available: {\url{http://www.stanford.edu/class/ee364b/lectures.html}}
%\bibitem{YooZF}
%T. Yoo and A. Goldsmith, ``On the optimality of multi-antenna broadcast
%scheduling using zero-forcing beamforming,'' {\it IEEE J. Sel. Areas Commun.}, vol. 24, no. 3, pp. 528-541, Mar. 2006.

\end{thebibliography}
\end{document}